\newcommand{\bGamma}{\mathbf{\Gamma}}
\newcommand{\bR}{\mathbf{R}}
\newcommand{\CSP}[1]{\ensuremath{\operatorname{CSP}(#1)}}
\newcommand{\bv}{\mathbf{v}}
\DeclareMathOperator{\ar}{ar}
\DeclareMathOperator{\pol}{Pol}
\DeclareMathOperator{\mpol}{MPol}
\def\house{\widehat{\Join}}
\newif\ifTR\TRtrue
\newif\ifNO\NOtrue
\newif\ifCOM\COMfalse
\begin{document}

\title{On the induced problem for fixed-template CSPs}
%
%
\author{Rustem Takhanov\inst{1}\orcidID{0000-0001-7405-8254}}
\authorrunning{R. Takhanov}
%
\institute{
Nazarbayev University, 53 Kabanbay Batyr Ave, Astana, Kazakhstan
\email{rustem.takhanov@nu.edu.kz}}
\maketitle              
\begin{abstract}
The Constraint Satisfaction Problem (CSP) is a problem of computing a homomorphism $\bR\to \bGamma$ between two relational structures, where $\bR$ is defined over a domain $V$ and $\bGamma$ is defined over a domain $D$. 
In a fixed template CSP, denoted $\textsc{CSP}(\bGamma)$, the right side structure $\bGamma$ is
fixed and the left side structure $\bR$ is unconstrained. 
In the last two decades it was discovered that the reasons that make fixed template CSPs polynomially solvable are of algebraic nature, namely, templates that are tractable should be preserved under certain polymorphisms. 
From this perspective the following problem looks natural: given a prespecified finite set of algebras ${\mathcal B}$ whose domain is $D$, is it possible to present the solution set of a given instance of $\textsc{CSP}(\bGamma)$ as a subalgebra of ${\mathbb A}_1\times ... \times {\mathbb A}_{|V|}$ where ${\mathbb A}_i\in {\mathcal B}$?

We study this problem and show that it can be reformulated as an instance of a certain fixed-template CSP over another template $\bGamma^{\mathcal B}$. 

We study conditions under which $\textsc{CSP}(\bGamma)$ can be reduced to $\textsc{CSP}(\bGamma^{\mathcal B})$. 
This issue is connected with the so-called CSP with an input prototype, formulated in the following way: given a homomorphism from $\bR$ to $\bGamma^{\mathcal B}$ find a homomorphism from $\bR$ to $\bGamma$. We prove that if ${\mathcal B}$ contains only tractable algebras, then the latter CSP with an input prototype is tractable.
We also prove that $\textsc{CSP}(\bGamma^{\mathcal B})$ can be reduced to $\textsc{CSP}(\bGamma)$ if the set ${\mathcal B}$, treated as a relation over $D$, can be expressed as a primitive positive formula over $\bGamma$. 
\keywords{Lifted language\and Lifted constraint instance\and CSP with input prototype\and Constraint satisfaction problem\and Tractability.}
\end{abstract}

\section{Introduction}\label{sec:intro}

The {\it constraint satisfaction problem (CSP)} can be formalized in the variable-value form as a problem of finding an assignment of values to a given set of variables, subject to constraints. There is also an equivalent formulation of it as a problem of finding a homomorphism $h: \bR\to \bGamma$ for two given finite relational structures $\bR$ and $\bGamma$. 

A special case, when the second relational structure is some fixed $\bGamma$ (called a template) and the domain of $\bGamma$ is Boolean, was historically one of the first NP-hard problems~\cite{Cook:1971}, and its study has attracted some attention since the 70s~\cite{Garey:1990,Schaefer:1978}. 
Feder and Vardi~\cite{feder98:monotone} formulated the so-called dichotomy conjecture that states that for a template $\bGamma$ over an arbitrary finite domain, $\CSP\bGamma$ is either polynomially solvable or NP-hard.
In \cite{Jeavons:1998} Jeavons showed that the complexity of $\CSP\bGamma$ is determined by the so-called polymorphisms of $\bGamma$. A polymorphism of a relation $\varrho\subseteq D^n$ is defined as an $m$-ary function $p: D^m\to D$ such that $\varrho$ is closed under the operation $p$ that is applied component-wise to tuples from $\varrho$. A polymorphism of a template $\bGamma$ is defined as a function that is a polymorphism of all relations in $\bGamma$. Jeavons's result implies that if two languages $\bGamma$ and $\bGamma'$ have the same polymorphisms, then $\CSP\bGamma$ and $\textsc{CSP}(\bGamma')$ are log-space inter-reducible.
Further development of this method~\cite{bulatov05:classifying,MarotiMcKenzie,Siggers,4ary} made it possible to
precisely delineate the borderline between the polynomial-time and NP-complete templates of the CSP. 
Bulatov~\cite{Bulatov17a} and Zhuk~\cite{Zhuk17} independently confirmed the Feder-Vardi conjecture. 

Polymorphisms are interesting objects by themselves. Obviously, if a template $\bGamma$ has a polymorphism $p$, then for any instance $\bR$ the set of solutions is preserved by $p$. In other words, any such polymorphism induces an algebra on the solution set. Is there any simple generalization of that algebra that also takes into account the structure $\bR$?





Inspired by this observation, we suggest generalizing algebras induced by polymorphisms of $\bGamma$ in the following way. Suppose that we have a finite set  ${\mathcal B}$ of similar algebras, i.e. every element ${\mathbb A}\in {\mathcal B}$ is a tuple $(D, o^{\mathbb A}_1, ..., o^{\mathbb A}_k)$ where $o^{\mathbb A}_i:D^{n_i}\to D, i\in 1,\dots, k$. Now, any instance of $\CSP\bGamma$ corresponds to another search problem, which we call {\em the induced problem}. 
More
precisely, let $(V,C)$ be an instance of $\CSP\bGamma$, where $V$ is a set of variables and $C$ is a set of constraints.
The goal of the induced problem is to assign an algebra ${\mathbb A}_v\in {\mathcal B}$ to each variable $v\in V$ in such a way that for every constraint
$\langle (v_1,...,v_k),\varrho\rangle\in C$, the relation $\varrho$ is a subalgebra of ${\mathbb A}_{v_1}\times ... \times {\mathbb A}_{v_p}$. If we are able to find such an assignment, it can be proved that the set of all solutions of the initial CSP instance (which is a subset of $D^V$) is a subalgebra of $\mathop\prod\limits_{v\in V}{\mathbb A}_{v}$. 

\vspace{-10pt}
\paragraph{\bf Motivation.} 
Green and Cohen~\cite{GREEN20081094} considered the following computational problem which is a special case of the induced problem. Given a CSP over $D=\{1,\cdots, d\}$, they formulate a search problem: for any variable $v\in V$ find a permutation $\pi_v: D\to D$ such that if an assignment to $(v_1, ..., v_p)$ is constrained to be in $\varrho\subseteq D^{p}$ (in the initial CSP), then the resultant permuted relation $\varrho' = \{(\pi_{v_1}(x_1), \cdots, \pi_{v_p}(x_p)) | (x_1, \cdots, x_p)\in \varrho\}$ is max-closed (i.e. has a polymorphism $\max(x, y)$). 

Indeed, suppose that we solved the described problem. Then, the initial CSP can be modified by a simple substitute $\varrho\rightarrow\varrho'$ in every constraint. This new CSP instance is called the permuted CSP.
Then to any solution $h:V\to D$ of the permuted CSP (whose existence can be identified efficiently because the CSP is max-closed), we correspond a solution of the initial one by $h'(v) = \pi_v^{-1}(h(v))$. 

To find such permutations $\pi_v, v\in V$ is computationally difficult in general. 
Green and Cohen's construction is a special case of ours, if we define a set of algebras $${\mathcal B} = \{{\mathbb A}_{\pi}| \pi {\rm \,\, is \,\, a\,\, permutation\,\,of\,\,}D\},$$ where for any permutation $\pi$ of $D$ we set $${\mathbb A}_{\pi} = \big(D, \max{}_{\pi}\big), \max{}_{\pi}(x,y) = \pi^{-1}\big(\max(\pi(x), \pi(y))\big).$$
The assignment of variables $v\to \pi_v$ corresponds to the assignment $v\to {\mathbb A}_{\pi_v}$ in our framework. 

\vspace{-10pt}
\paragraph{\bf Our results.} 
First we prove that a search for an assignment $v\to {\mathbb A}_{v}$ in the induced problem is equivalent to solving another fixed-template $\textsc{CSP}(\bGamma^\mathcal{B})$ and study the relationship between  $\textsc{CSP}(\bGamma)$ and $\textsc{CSP}(\bGamma^\mathcal{B})$. 
We prove that if the family $\mathcal{B}$ is tractable, i.e. all algebras in $\mathcal{B}$ are tractable (individually), and we are given a homomorphism $\bR\to \bGamma^{\mathcal{B}}$, then a homomorphism $h:\bR\to \bGamma$ can be found efficiently (Theorem~\ref{Btract}). This result generalizes the desirable property of Green and Cohen's construction, in which, given appropriate permutations of domains, the initial CSP can be efficiently solved.
As a corollary we obtain that if $\bGamma$  maps homomorphically to  $\bGamma^\mathcal{B}$, then $\textsc{CSP}(\bGamma)$ is reducible to $\textsc{CSP}(\bGamma^\mathcal{B})$ (Theorem~\ref{mainreduction1}).
Further we prove that if the family $\mathcal{B}$ has a certain structure, namely, when a certain relation defined by $\mathcal{B}$ (called the trace of $\mathcal{B}$) is expressible as a primitive positive formula over $\bGamma$, 
$\textsc{CSP}(\bGamma^\mathcal{B})$ can be reduced to $\textsc{CSP}(\bGamma)$ (Theorems~\ref{simple-red},~\ref{strong-red} and Corollary~\ref{BtoG}).


\vspace{-10pt}
\paragraph{\bf Organization.} In Section~\ref{sec:prelim} we give all necessary definitions and state some basic facts that we need. In Subsection~\ref{sec:construction} we describe the construction of {\em the ``lifted language''}, taken from \cite{kolmogorov15:mrt}, and we  introduce a novel framework of {\em CSPs with input prototype}.
In Section~\ref{sec:relaxed} we introduce our main construction of the template $\bGamma^\mathcal{B}$ and give examples of this construction. 
In Section~\ref{sec:Main} we prove the main result of the paper, that is an algorithm for $\textsc{CSP}(\bGamma)$, given a homomorphism from an instance to $\bGamma^\mathcal{B}$ (Theorem~\ref{Btract}).
Section~\ref{sec:relation} is dedicated to a Karp reduction of $\textsc{CSP}(\bGamma^\mathcal{B})$ to a non-uniform CSP that has basically two types of relations --- those that are in $\Gamma$ and a so called trace of $\mathcal{B}$ (Theorems~\ref{simple-red} and~\ref{strong-red11}). As a consequence we obtain that $\textsc{CSP}(\bGamma^\mathcal{B})$ can be reduced to $\textsc{CSP}(\bGamma)$ if $\mathcal{B}$ is preserved under all polymorphisms of $\bGamma$. 

In Section~\ref{nice-examples} of Appendix one can find 7 examples demonstrating a usefullness of the induced problem (besides Green and Cohen's case). These examples show that for an appropriately chosen $\mathcal{B}$, the template $\bGamma^\mathcal{B}$ is tractable by construction, for any $\bGamma$. Any proof that is omitted in the main part of the paper can be found in Appendix. 


\vspace{-10pt}
\section{Preliminaries}\label{sec:prelim}
\vspace{-5pt}
A problem is called {\em tractable} if it can be solved in polynomial time. We assume ${\rm P} \neq {\rm NP}$. 
Typically, a finite domain of CSP is denoted by $D$ and a  finite set of variables is denoted by $V$. We denote tuples in lowercase boldface such as $\mathbf{a} = (a_1, \dots, a_k)$. Also for a mapping $h \colon A \to B$ and a tuple $\mathbf{a} = (a_1, \dots, a_k)$, where $a_j \in A$ for $j = 1, \dots, k$, we will write $\mathbf{b} = (h(a_1), \dots, h(a_k))$ simply as $\mathbf{b} = h(\mathbf{a})$. 
Let $\ar(\varrho)$, $\ar(\mathbf{a})$ stand for the arity of a relation $\varrho$ and the size of a tuple $\mathbf{a}$, respectively. 
A relational structure is a finite set and a tuple of relations of finite arity defined on that set.
The set $\{1,...,k\}$ is denoted by $[k]$.
 
Let us formulate the general CSP as a homomorphism problem.

\begin{definition} Let $\mathbf{R} = (D^\mathbf{R}, r^\mathbf{R}_1, \cdots, r^\mathbf{R}_s)$ and $\mathbf{R}' = (D^{\mathbf{R}'}, r^{\mathbf{R}'}_1, \cdots, r^{\mathbf{R}'}_s)$ be relational structures with a common signature (that is $\ar(r^\mathbf{R}_i)= \ar(r^{\mathbf{R}'}_i)$  for every $i \in [s]$). A mapping $h\colon D^\mathbf{R}\to D^{\mathbf{R}'}$ is called a \emph{homomorphism} from $\mathbf{R}$ to $\mathbf{R}'$ if for every $i \in [s]$ and for any $(x_1, \dots, x_{\ar(r^{\mathbf{R}}_i)}) \in r^{\mathbf{R}}_i$ we have that $\big((h(x_1), \dots, h(x_{\ar(r^{\mathbf{R}}_i)})\big) \in r^{\mathbf{R}'}_i$. In this case, we write $\mathbf{R} \stackrel{h}{\to} \mathbf{R}'$ or just $\mathbf{R} \to \mathbf{R}'$. Also, we denote ${\rm Hom}(\mathbf{R}, \mathbf{R}') = \{h\mid \mathbf{R} \stackrel{h}{\to} \mathbf{R}'\}$.
\end{definition}

A finite relational structure $\mathbf{\Gamma} = (D, \varrho_1, \dots, \varrho_s)$ over a fixed finite domain $D$ is called a template.

\begin{definition} Let $D$ be a finite set and $\bGamma$ a template over $D$. Then the {\bf fixed template CSP} for template $\bGamma$, denoted $\CSP\bGamma$, is defined as follows:
given a relational structure $\mathbf{R}$ of the same signature as $\bGamma$, find a homorphism $h:\mathbf{R} \to \mathbf{\Gamma}$.\footnote{Throughout the paper we define the CSP as a search problem, not as a decision problem, taking into account that both formulations are equivalent~\cite{Cohen2004,bulatov05:classifying}.}
\end{definition}

For $\bGamma = (D, \varrho_1, \dots, \varrho_s)$ we  denote by $\Gamma$ (without boldface) the set of relations $\{ \varrho_1, \dots, \varrho_s \}$ (which is called {\em the constraint language}) and by $\CSP\Gamma$ we  denote $\CSP\bGamma$.

\begin{definition}
A language $\Gamma$ is said to be tractable if
$\textsc{CSP}(\Gamma)$ is tractable. Also, $\Gamma$ is said to be NP-hard if $\textsc{CSP}(\Gamma)$ is NP-hard.
\end{definition}


Any language $\Gamma$ over a domain $D$ can be associated
with a set of operations on $D$,
known as the polymorphisms of $\Gamma$ \cite{barto_et_al}, defined as follows.
\begin{definition}
\label{def:polymorphism}
An operation $g: D^m\to D$ is a \emph{polymorphism} of a relation $\varrho\subseteq D^n$ (or ``$g$ \emph{preserves} $\varrho$'', or ``$\varrho$ \emph{is preserved by} $g$'') if,
for any $m\times n$-matrix whose columns are $\overline{x}_1,\ldots,\overline{x}_n$ and whose rows are all in $\varrho$,
we have  $\big(g(\overline{x}_1),\ldots,g(\overline{x}_n)\big)\in \varrho$.
For any constraint language $\Gamma$ over a set $D$,
we denote by $\pol(\Gamma)$ the set of all operations on $D$ which are polymorphisms of every
$\varrho \in \Gamma$.
\end{definition}
%

%
\ifTR 
Let us denote the set of polymorphisms of $\Gamma$ by $\pol(\Gamma)$. 
Jeavons~\cite{Jeavons:1998} showed that the complexity of $\CSP\Gamma$ is fully determined by $\pol(\Gamma)$, which was the first step in developing the so-called algebraic approach to fixed template CSP. We will also use the notation $\pol(\bGamma)$, meaning $\pol(\Gamma)$.
\else 
\fi
\vspace{-10pt}
\subsection{Multi-sorted CSPs}
\vspace{-5pt}
For any finite collection of finite domains $\mathcal{D} = \big\{D_i|i\in I\big\}$, and any list of
indices $(i_1, i_2, ...,i_m) \in I^m$, a subset $\varrho$ of $D_{i_1} \times D_{i_2} \times \cdots \times D_{i_m}$, together with the list $(i_1, i_2, ...,i_m)$, is called a multi-sorted relation over $\mathcal{D}$ with arity $m$ and signature $(i_1, i_2, ...,i_m)$. For any such relation $\varrho$, the signature of $\varrho$ is
denoted $\sigma(\varrho)$.

\begin{definition}\label{multi-sorted}
Let $\Gamma$ be a set of multi-sorted relations over a collection of sets
$\mathcal{D} = \big\{D_i|i\in I\big\}$. The multi-sorted constraint satisfaction problem over $\Gamma$,
denoted $\textsc{MCSP}(\Gamma)$, is defined to be the search problem with:

\paragraph{\bf Instance:} A triple $(V ; \delta; \mathcal{C})$ where
\begin{itemize}
\item $V$ is a set of variables;
\item $\delta$ is a mapping from $V$ to $I$, called the domain function;
\item $\mathcal{C}$ is a set of constraints, where each constraint $C\in\mathcal{C}$ is a pair $(s,\varrho)$,
such that
\begin{itemize}
\item $s = (v_1, . . . , v_{m_C})$ is a tuple of variables of length $m_C$, called the
constraint scope;
\item $\varrho$ is an element of $\Gamma$ with the arity $m_C$ and signature $(\delta(v_1), . . . , \delta(v_{m_C}))$
called the constraint relation.
\end{itemize}
\end{itemize}

\paragraph{\bf Question:} Find a solution (or, indicate its nonexistence), i.e., a function $\phi$, from $V$ to $\cup_{i\in I}D_i$
such that, for each variable $v\in V$ , $\phi(v) \in D_{\delta(v)}$, and for each constraint
$(s, \varrho) \in \mathcal{C}$, with $s = (v_1, . . . , v_{m})$, the tuple $(\phi(v_1), ..., \phi(v_m))$ belongs to $\varrho$?
\end{definition}

By construction a fixed template CSP, given in the form of a homomorphism problem, can be formulated as a multi-sorted CSP over a collection of domains $\mathcal{D} = \big\{D\big\}$. The problem of finding a homomorphism $h: \mathbf{R}\to \bGamma$ where $\mathbf{R} = (V,r_1, \dots, r_s)$ and $\bGamma = (D, \varrho_1, \dots, \varrho_s)$, is equivalent to the following set of constraints:
\begin{equation}\label{vv-recored}
\big\{
(\bv, \varrho_i) | i\in [s], \bv\in r_i
\big\}.
\end{equation}

\begin{definition}
A set of multi-sorted relations, $\Gamma$ is said to be tractable if 
$\textsc{MCSP}(\Gamma)$ is tractable.
\end{definition}

\begin{definition}
Let $\mathcal{D}$ be a collection of sets. An $m$-ary multi-sorted operation
$t$ on $\mathcal{D}$ is defined by a collection of interpretations $\big\{t^D| D\in\mathcal{D}\big\}$, where each $t^D$ is an $m$-ary operation on the corresponding set $D$. A multi-sorted operation
$t$ on $\mathcal{D}$ is said to be a polymorphism of an $n$-ary multi-sorted relation $\varrho$ over $\mathcal{D}$ with
signature $(\delta(1), . . . , \delta(n))$ if, for any $m\times n$-matrix $\left[\overline{x}_1,\ldots,\overline{x}_n\right]$ whose rows are all in $\varrho$, we have
\begin{equation}
\big(t^{D_{\delta(1)}}\big( \overline{x}_1 \big), \ldots, t^{D_{\delta(n)}}\big(\overline{x}_n\big)\big)
\in \varrho.
\end{equation}
\end{definition}
For the set of multi-sorted relations $\Gamma$, $\mpol(\Gamma)$ denotes the set of all multi-sorted operations that are polymorphisms
of each relation in $\Gamma$.

\vspace{-10pt}
\subsection{The lifted language}\label{sec:construction}
\vspace{-5pt}
Let $\bGamma=(D,\varrho_1,\ldots,\varrho_s)$ be a template and $\bR=(V,r_1,\ldots,r_s)$ be a relational structure given as an input to $\CSP\bGamma$. 
The problem of finding a homomorphism $h: \bR\to \bGamma$ can be reformulated as an instance of the multi-sorted CSP in many different ways. We choose the most straightforward one as it gives an insight into the construction of the lifted language. We introduce for every variable $v\in V$ its  unique domain $D_v = \{(v,a)|a\in D\}$. Thus, we get a collection of domains $\mathcal{D} = \{D_v| v\in V\}$.

For tuples $\mathbf{a} = (a_1, \dots, a_p) \in D^p$ and $\bv=(v_1,\ldots,v_p)\in V^p$ denote $d(\mathbf{v}, \mathbf{a}) = ((v_1, a_1), ...,$ $(v_p, a_p))$.
Now for a relation $\varrho \subseteq D^p$ and $\mathbf{v}=(v_1,\ldots,v_p)\in V^{p}$ we will define a multi-sorted relation $\varrho(\mathbf{v})$ over $\mathcal{D}$ with a signature $(v_1,\ldots,v_p)$ by
\begin{equation}\label{pred-func}
\varrho(\bv) =\big\{ d(\mathbf{v}, \mathbf{y}) | \mathbf{y}\in \varrho\big\}.
\end{equation}

The set of constraints
$\{\big(\bv, \varrho_i(\bv)\big) \::\: i\in[s], \bv \in r_i \} \cup \{ (v, D_v) \::\: v \in V\} 
$
defines an instance of the multi-sorted CSP whose solutions are in one-to-one correspondence with homomorphisms from $\bR$ to $\bGamma$. The correspondence between $h: V\to D$ and $h': V\to \cup \mathcal{D}$ is established by the rule $h'(v) = (v, h(v))$.

Finally, we construct the language $\Gamma_\bR$ (which is called the lifted language) that consists of multi-sorted relations over $\mathcal{D}$
\begin{equation}
\Gamma_\bR = \{\varrho_i(\bv) \::\: i\in[s], \bv \in r_i \} \cup \{ D_v \::\: v \in V\}.
\end{equation}
Note that, if all relations in $\bR$ are nonempty, the lifted language $\Gamma_\bR$ contains all information about a pair $\bR, \bGamma$. After ordering its relations we get the template $\bGamma_\bR$. A more general version of this language (formulated in terms of cost functions) first appeared in the context of {\em the hybrid CSPs} which is an extension of the fixed-template CSP framework (see Section~5.1 of~\cite{kolmogorov15:mrt}).

Note that this language defines $\textsc{MCSP}(\bGamma_\bR)$, in which every variable is paired with its domain as in Definition~\ref{multi-sorted}. Sometimes the lifted language will be treated as a set of relations over a common domain $\cup_{v} D_v = V\times D$ (i.e. not multi-sorted), e.g. as in Theorem~\ref{strong-red}. Since all domains $D_v, v\in V$ are disjoint, $\textsc{MCSP}(\bGamma_\bR)$ and $\textsc{CSP}(\bGamma_\bR)$ are Karp reducible to each other.

The following lemma plays a key role in our paper. It shows that  $\textsc{MCSP} (\bGamma_\bR)$ is equivalent to another problem formulation called {\em the CSP with an input prototype} (see Section 6 of~\cite{Takhanov15}). 

\begin{definition}
For a given template $\bGamma$ and a relational structure ${\bf P}$, {\bf the CSP with an input prototype} ${\bf P}$ is a problem, denoted $\textsc{CSP}^+_{{\bf P}}(\bGamma)$, for which: a) an instance is a pair $(\bR, \chi)$ where $\bR$ is a relational structure and $\chi:\bR\to {\bf P}$ is a homomorphism; b) the goal is to find a homomorphism $h: \bR\to\bGamma$. 
\end{definition}
E.g., when ${\bf P} = ([4], \ne), \bGamma = ([3], \ne)$, then $\textsc{CSP}^+_{{\bf P}}(\bGamma)$ is a problem of finding a 3-coloring of a graph whose 4-coloring is given as part of input.
\begin{lemma}\label{prototype} $\textsc{MCSP}(\bGamma_{\bf P})$ and $\textsc{CSP}^+_{{\bf P}}(\bGamma)$ are Karp reducible to each other in linear time.
\end{lemma}
\ifTR
\begin{proof} {\bf Karp reduction of $\textsc{MCSP}(\bGamma_{{\bf P}})$ to $\textsc{CSP}^+_{{\bf P}}(\bGamma)$.}
Let $\bGamma = \big(D, \varrho_1,..., \varrho_s\big)$ and ${\bf P}=(V,r_1,\ldots,r_s)$ be given. 
Let $\mathcal{I}$ be an instance of $\textsc{MCSP}(\bGamma_{{\bf P}})$ that consists of a set of variables $W$ and a set of constraints $C$. For any $\varrho_i(\bv)\in \Gamma_{{\bf P}}$ let us denote $f_i^\bv = \{\bv' | (\bv',\varrho_i(\bv))\in C\}$. Thus,
$C = \big\{(\bv',\varrho_i(\bv))| i\in[s], \bv \in r_i,\bv'\in f_i^\bv\big\}$.

Also, we are given an assignment $\delta: W\rightarrow V$, that assigns each variable $v\in W$ its domain $D_{\delta(v)}$. Denote ${\bR} = (W, f_1,...,f_s)$, where $f_i = \cup_{\bv\in r_i}f_i^\bv$. 


According to definition~\ref{multi-sorted}, $\varrho_i(\bv)$ is a relation with signature $\delta(\bv')$, $\bv'\in f_i^\bv$.
Therefore, for any $\bv'\in f_i^\bv$ its component-wise image $\delta(\bv')$ is exactly the tuple $\bv$. Since $\bv\in r_i$, we conclude ${\bR}\mathop\rightarrow\limits^{\delta} {\bf P}$. 

For $h: W\to D$, let us define $h^\delta: W \to V\times D$ by $h^\delta(v) = (\delta(v), h(v))$.
Vice versa,  to every assignment $g: W \to V\times D$ we will associate an assignment $g^f(x) = F(g(x))$ where $F$ is a ``forgetting'' function, i.e. $F((v,a)) = a$.  For any assignment $g: W \to V\times D$ that satisfies $g(v)\in D_{\delta(v)}$, by construction, we have $(g^f)^\delta=g$. 
By construction, $g$ is a solution of our multi-sorted CSP if and only if ${\bR}\mathop\rightarrow\limits^{g^f}\bGamma$.
The latter is an instance of $\textsc{CSP}^+_{{\bf P}}(\bGamma)$ with an input structure ${\bR}$ and a homomorphism $\delta:{\bR}\to {\bf P}$ given, and any solution $h$ of it corresponds to a solution $h^\delta$ of $\mathcal{I}$. By construction, all computations are polynomial-time, so we proved that $\textsc{MCSP}(\bGamma_{{\bf P}})$ can be polynomially reduced to $\textsc{CSP}^+_{{\bf P}}(\bGamma)$.
\vspace{-7pt}
\paragraph{\bf Karp reduction of $\textsc{CSP}^+_{{\bf P}}(\bGamma)$ to $\textsc{MCSP}(\bGamma_{{\bf P}})$.} Let $\bGamma = \big(D, \varrho_1,..., \varrho_s\big)$, ${\bf P}=(V,r_1,\ldots,r_s)$. Suppose we are given an instance of $\textsc{CSP}^+_{\bf P}(\bGamma)$ with an input structure $\bR = (W, f_1,...,f_s)$ and a homomorphism $\delta:\bR\to {\bf P}$, i.e. our goal is to satisfy the set of constraints
$\big\{(\bv, \varrho_i) | i\in [s], \bv\in f_i\big\}$.
Let us construct an instance of $\textsc{MCSP}(\bGamma_{{\bf P}})$:
$$\big\{(\bv, \varrho_i(\delta(\bv))) |i\in [s], \bv\in f_i\big\}, \{(v, D_{\delta(v)})| v\in W\}.$$
It is straightforward to check that if $g$ is a solution for this instance then $h = g^f$ is a solution for $\textsc{CSP}^+_{{\bf P}}(\bGamma)$ and visa versa. It remains to note that $\big\{\varrho_i(\delta(\bv)) |i\in [s], \bv\in f_i\big\}\subseteq \Gamma_{{\bf P}}$.
By construction both reductions take linear time on the size of input. \qed
\end{proof}
\else
\fi

\ifCOM
\section{Algebras induced on the solution set}
\subsection{Motivation}
In this section we will describe the main objects of our study.
For any input $\bR=(V,r_1,\ldots,r_s)$ of $\CSP\bGamma$ let us denote 
$$\textsc{Hom}(\bR,\bGamma)=\{h| h:\bR\to \bGamma\}$$ 

As we previously mentioned, the complexity of $\CSP\bGamma$ is fully determined by $\pol(\bGamma)$. Moreover, it is a well-known fact that any polymorphism $p\in \pol(\bGamma)$ also preserves the set $\textsc{Hom}(\bR,\bGamma)$. In other words, if $p$ is $n$-ary, then for any $h_1,..., h_n\in \textsc{Hom}(\bR,\bGamma)$, $h(v) = p(h_1(v),..., h_n(v))$  also is in $\textsc{Hom}(\bR,\bGamma)$.
Thus, $p$ not only gives us information on the tractability of $\CSP\bGamma$, but also defines an operation on $\textsc{Hom}(\bR,\bGamma)$. 

This way of defining an algebraic structure on $\textsc{Hom}(\bR,\bGamma)$ can be directly generalized. Let us, instead of starting from the polymorphism $p\in\pol(\bGamma)$ (that does not depend on the input $\bR$), consider more general multi-sorted polymorphisms of the lifted language $\Gamma_\bR$. Recall that a multi-sorted polymorphism $m\in \mpol(\Gamma_\bR)$ should have an interpretation $m^{D_v}$ on every domain $D_v, v\in V$ and this adds some freedom to the definition of $m$. Thus, $\mpol(\Gamma_\bR)$ is a substantially richer object than $\pol(\bGamma)$, and its study can give us more information about the structure of $\textsc{Hom}(\bR,\bGamma)$.

\begin{example}\label{cohen-horn} If $D=[d]$ and $\Gamma$ is preserved by $\max(x,y)$, then $\Gamma$ is tractable~\cite{jeavons98:algebraic}. 
A special case of CSP with such a constraint language is equivalent to the satisfiability problem with Horn (anti-Horn) clauses. A generalization of this class was proposed in~\cite{GREEN20081094}. Suppose that $\bR$ is an instance of $\CSP\bGamma$ and also we are given a collection of domain permutations $\pi_v: D\to D, v\in V$. Now let us introduce a copy of the set $V$, denoted $V' = \{v'| v\in V\}$, and add new variables $v'\in V'$ to the CSP instance, together with new binary constraints that require an assignment of a pair $(v,v')$, where $v'$ is a copy of $v$, to be in $\{(a,b)|b=\pi_v(a)\}$  (all old constraints are unchanged). There is one-to-one correspondence between solutions of old CSP and the new one (defined on a set of variables $V\cup V'$). Further, we can discard old variables, i.e. $V$, and recalculate initial constraints on $V'$ (in other words, we project a set of solutions onto new variables from $V'$). Thus, if $(v_1, \dots, v_p)\in r_i$, now assignments of $(v'_1, \dots, v'_p)$ are constrained to be in 
$\varrho'_i = \{(\pi_{v_1}(x_1), \cdots, \pi_{v_p}(x_p)) | (x_1, \cdots, x_p)\in \varrho_i\}$. This operation of discarding $V$  defines a CSP instance $\mathcal{I}$ with a set of variables $V'$. 
If, after recalculation of constraints, relations of $\mathcal{I}$ become max-closed (i.e. preserved by $\max(x,y)$), then one can solve $\mathcal{I}$, find its solution $h': V'\to D$, and recover a solution for the initial one by $h(v) = \pi_v^{-1}(h'(v'))$.

Suppose that domain permutations $\pi_v: D\to D, v\in V$ satisfy that property, i.e. all recalculated relations are max-closed. Then, it is straightforward to show that the collection $\pi_v: D\to D, v\in V$ induces a multi-sorted polymorphism $m$ of $\Gamma_\bR$ given by the following rule: $$m^{D_v}\big((v,a), (v, b)\big) = \big(v, {\max}_{\pi_v}(x,y)\big)$$ where $\max_\pi(x,y) =  \pi^{-1}(\max(\pi(a),\pi(b)))$. Conversely, any polymorphism $m\in\mpol(\Gamma_\bR)$ for which  $m^{D_v}((v,a), (v, b)) = (v, {\max}_{\pi'_v} (x,y))$, for appropriate $\pi'_v$, defines a collection of permutations for which all recalculated relations are max-closed. Thus, the idea of the domain permutation reduction to max-closed languages can be understood as a problem of finding a polymorphism $m\in\mpol(\Gamma_\bR)$ of a certain kind.
\end{example}



Thus, we will be especially interested in $n$-ary polymorphisms $m\in\mpol(\Gamma_\bR)$ for which $m^{D_v} \big((v, a_1), ..., (v,a_{n})\big) = \big(v, o(a_1, ..., a_{n})\big)$ where $o$ will be taken from a set of operations specified in advance. Below we show how this idea can be implemented in the most general form, i.e. for algebras.
\else
\fi

\vspace{-16pt}
\section{The construction}\label{sec:relaxed}
\vspace{-5pt}
Suppose that we are given a list $o_1,...,o_k$ of symbols with prescribed arities $n_1,...,n_k$. This list is called the signature and denoted $\sigma$. 
An {\em algebra} with a signature $\sigma$ is a tuple $\mathbb{A} = \big(D^\mathbb{A}, o^{\mathbb{A}}_1,o^\mathbb{A}_2,...,o^\mathbb{A}_k\big)$, where $D^\mathbb{A}$ denotes a finite domain of the algebra and $o^\mathbb{A}_i: \big(D^\mathbb{A}\big)^{n_i}\rightarrow D^\mathbb{A}, i\in [k]$ denote its basic operations. 
Let us denote by $\mathcal{A}_D^\sigma$ the set of algebras with one fixed signature $\sigma$ and over a single fixed domain $D$.
Suppose we are given a collection $\mathcal{B}\subseteq \mathcal{A}_{D}^\sigma$ and a relational structure $\bGamma = \big(D, \varrho_1, ..., \varrho_s\big)$ where $\varrho_i$ is a relation 
over $D$. 

\begin{definition}\label{def-main1}
Let $\varrho$ be an $m$-ary relation over $D$. Let us define
$$
\varrho^{\mathcal{B}} = \{(\mathbb{A}_1,...,\mathbb{A}_m)\in \mathcal{B}\times \cdots \times \mathcal{B} \mid \varrho{\rm \,\,is\,\,a\,\,subalgebra\,\,of\,\,}\mathbb{A}_1\times \cdots\times\mathbb{A}_m\}.
$$
In other words, we define the relation $\varrho^{\mathcal{B}}$ as a subset of $\mathcal{B}^m$ that consists of tuples $(\mathbb{A}_1,...,\mathbb{A}_m)\in \mathcal{B}^m$ such that for any $i\in [k]$, $\big(o^{\mathbb{A}_1}_i,o^{\mathbb{A}_2}_i,...,o^{\mathbb{A}_m}_i\big)$ is a component-wise polymorphism of $\varrho$. \footnote{$\varrho^\mathcal{B}$ can be empty.} \,\, The last condition means that for any matrix $\left[\overline{x}_1,\ldots,\overline{x}_m\right]\in D^{n_i\times m}$ whose rows are all in $\varrho$, we have $\big(o^{\mathbb{A}_1}_i(\overline{x}_1)$, $o^{\mathbb{A}_2}_i(\overline{x}_2)$, ..., $o^{\mathbb{A}_m}_i(\overline{x}_m)\big)\in \varrho$. 
\end{definition}

\begin{definition}\label{def-main2}
Given $\Gamma$ and $\mathcal{B}$, we define $\Gamma^{\mathcal{B}} = \big\{\varrho^\mathcal{B} | \varrho\in \Gamma\big\}$. Analogously, if $\bGamma = \big(D, \varrho_1, ..., \varrho_s\big)$ where $\varrho_i$ is a relation 
over $D$, then we define $\bGamma^\mathcal{B} = \big(\mathcal{B}, \varrho^\mathcal{B}_1, \cdots, \varrho^\mathcal{B}_s\big)$. 
\end{definition}

Now, given an instance $\bR$ of $\textsc{CSP}(\bGamma)$ we can consider $\bR$ as an instance of $\textsc{CSP}(\bGamma^{\mathcal{B}})$. Let us decode Definitions~\ref{def-main1} and~\ref{def-main2}. Any $h\in \textsc{Hom}(\bR,\bGamma^{\mathcal{B}})$ assigns to every variable $v\in V$ an algebra $h(v)\in \mathcal{B}$. For $j\in [s]$, $\bv\in r_j$ our assignment satisfies $h(\bv)\in \varrho_j^{\mathcal{B}}$, i.e. if $\bv = (v_1, ..., v_p)$, then $(o^{h(v_1)}_{i}, ..., o^{h(v_p)}_{i})$ component-wise preserves $\varrho_j$. Suppose now that for any $v\in V$ we create a unique copy of the domain $D$, i.e. $D_v = \{(v,a)|a\in D\}$, and define $m_i^{D_v}$ as an interpretation of $o^{h(v)}_{i}$ on this copy $D_v$, i.e.
$$
m_i^{D_v}\big((v, a_1), ..., (v,a_{n_i})\big) = \big(v, o^{h(v)}_{i}(a_1, ..., a_{n_i})\big).
$$
Since for $j\in [s]$ and $\bv = (v_1, ..., v_p)\in r_j$, $(o^{h(v_1)}_{i}, ..., o^{h(v_p)}_{i})$ component-wise preserves $\varrho_j$, operation $m_i$ preserves the multi-sorted relation $\varrho_j(v_1, ..., v_p)$ (see equation~\eqref{pred-func}). In other words, $m_{i}$ is a multi-sorted polymorphism of the lifted language $\Gamma_\bR$. Thus, every assignment 
$h\in \textsc{Hom}(\bR,\bGamma^{\mathcal{B}})$ induces a system of multi-sorted polymorphisms $m_1, ..., m_k\in \mpol(\Gamma_\bR)$. 

For special cases of $\mathcal{B}$, the structure of the template $\bGamma^{\mathcal{B}}$ has been studied in~\cite{takhanov:LIPIcs:2017:8247}. 
\vspace{-10pt}
\subsection{Example: Binary and conservative operations}\label{ExCohen} 
This example is a direct generalization of Proposition 36 from~\cite{GREEN20081094}.
Let us define $\mathcal{B}$ as the set of all algebras with a commutative and conservative binary operation over $D$, i.e.
$$
\mathcal{B} = \{(D, b)| b(x,y)\in \{x,y\}, b(x,y)=b(y,x)\}.
$$
It is a well-known fact that any commutative and conservative binary operation corresponds to a tournament on $D$, i.e. to a complete directed graph with a set of vertices $D$ in which antiparallel arcs are not allowed (an identity $b(x,y)=y$ for distinct $x,y\in D$ corresponds to an arc from $x$ to $y$ in the tournament). Thus, $|\mathcal{B}|=2^{|D| \choose 2}$.
We define a ternary operation $m$ on the set $\mathcal{B}$ that acts as follows:
$$
m((D, b_1),(D, b_2),(D, b_3)) = (D, b) \Leftrightarrow b(x,y)=b_1(b_2(x,y),b_3(x,y)).
$$
By construction $m$ outputs an element from $\mathcal{B}$, i.e. $m: \mathcal{B}^3\rightarrow \mathcal{B}$. It is straightforward to check that the following identities hold for conservative and commutative operations:
$$
b_1(b_1(x,y),b_2(x,y))=b_1(b_2(x,y),b_1(x,y)) = b_2(b_1(x,y),b_1(x,y)) =b_1(x,y)
$$
or
$m({\mathbb A},{\mathbb A},{\mathbb B})=m({\mathbb A},{\mathbb B},{\mathbb A})=m({\mathbb B},{\mathbb A},{\mathbb A})={\mathbb A},\,\,\,\,  \forall{\mathbb A},{\mathbb B}\in \mathcal{B}$.

Thus, $m$ is a majority operation.

Now let us prove that for any $\varrho\subseteq D^k$, $m$ is a polymorphism of $\varrho^{\mathcal{B}}$. Indeed, suppose that $({\mathbb A}_1, \cdots, {\mathbb A}_k)\in \varrho^{\mathcal{B}}$, $({\mathbb B}_1, \cdots, {\mathbb B}_k)\in \varrho^{\mathcal{B}}$ and $({\mathbb C}_1, \cdots, {\mathbb C}_k)\in \varrho^{\mathcal{B}}$. We denote ${\mathbb A}_i = (D, a_i), {\mathbb B}_i = (D, b_i), {\mathbb C}_i = (D, c_i)$. By definition of $\varrho^{\mathcal{B}}$, we know that 
$$
\overline{x} = \begin{bmatrix}
x_1 \\
\cdots \\
x_k
\end{bmatrix}, \overline{y} = \begin{bmatrix}
y_1 \\
\cdots \\
y_k
\end{bmatrix}\in \varrho \Rightarrow \begin{bmatrix}
b_1(x_1, y_1) \\
\cdots \\
b_k(x_k, y_k)
\end{bmatrix}, \begin{bmatrix}
c_1(x_1, y_1) \\
\cdots \\
c_k(x_k, y_k)
\end{bmatrix} \in \varrho.
$$
Therefore, component-wise application of $(a_1,\cdots, a_k)$ to the last two tuples also will result in a tuple from $\varrho$:
\vspace{-10pt}
$$
 \begin{bmatrix}
a_1(b_1(x_1, y_1), c_1(x_1, y_1)) \\
\cdots \\
a_k(b_k(x_k, y_k), c_k(x_k, y_k))
\end{bmatrix} \in \varrho.
$$
The latter implies that $\big(m({\mathbb A}_1,{\mathbb B}_1,{\mathbb C}_1), \cdots, m({\mathbb A}_k,{\mathbb B}_k,{\mathbb C}_k)\big)\in \varrho^{\mathcal{B}}$.

Since, $m$ is a majority polymorphism of any $\varrho^{\mathcal{B}}$, the problem $\textsc{CSP}(\bGamma^{\mathcal{B}})$ is tractable for any $\bGamma$~\cite{jeavons98:algebraic}. Moreover, $\textsc{CSP}(\bGamma^{\mathcal{B}})$ can be solved by a local consistency checking algorithm. Note that if we define $\mathcal{B}^o \subset \mathcal{B}$ as the set of all tournament pairs that correspond to total orders on $D$, then $\textsc{CSP}(\bGamma^{\mathcal{B}^o})$ is NP-hard in general (see Proposition 38 from~\cite{GREEN20081094}). 

This example shows that $\mathcal{B}$ can be such that $\textsc{CSP}(\bGamma^{\mathcal{B}})$ is tractable for any (possibly NP-hard) $\bGamma$. Other examples of this kind can be found in  Section~\ref{nice-examples} of Appendix.

\ifCOM
\begin{theorem} Let $h$ be a homomorphism from $\bR$ to $\bGamma^{\mathcal{B}}$. Also, let ${\mathbb A}$ be an algebra with signature $\sigma$ over the domain (the set of mappings) $D^V$, whose basic operations are defined by:
$$
o^{{\mathbb A}}_i (h_1, ..., h_{n_i})(v) = o^{h(v)}_i (h_1(v), ..., h_{n_i}(v))
$$
for any $i\in [k]$ and $h_1, ..., h_{n_i}\in D^V$. Then, $\textsc{Hom}(\bR,\bGamma)$ is a subalgebra of ${\mathbb A}$, i.e. $\textsc{Hom}(\bR,\bGamma)$ is preserved by all basic operations of ${\mathbb A}$.
\end{theorem} 
\ifTR
\begin{proof}
We only need to check that for any $i\in [k]$ and $h_1, ..., h_{n_i}\in \textsc{Hom}(\bR,\bGamma)$ we have $o^{{\mathbb A}}_i (h_1, ..., h_{n_i})\in \textsc{Hom}(\bR,\bGamma)$.

If $h_l\in \textsc{Hom}(\bR,\bGamma)$, $l\in [n_i]$, then for $j\in [s]$, $\bv\in r_j$, we have that $h_l(\bv)\in \varrho_j$. But since $h\in \textsc{Hom}(\bR,\bGamma^{\mathcal{B}})$, we also have that $h(\bv)\in \varrho_j^{\mathcal{B}}$.
If $\bv = (v_1, ..., v_p)$, then $(o^{h(v_1)}_{i}, ..., o^{h(v_p)}_{i})$ component-wise preserves $\varrho_j$. From the latter we obtain that $$\big(o^{h(v_1)}_i (h_1(v_1), ..., h_{n_i}(v_1)), ..., o^{h(v_p)}_i (h_1(v_p), ..., h_{n_i}(v_p))\big)\in \varrho_j$$
Thus, $o^{{\mathbb A}}_i (h_1, ..., h_{n_i})(\bv)\in \varrho_j$ and $o^{{\mathbb A}}_i (h_1, ..., h_{n_i})\in \textsc{Hom}(\bR,\bGamma)$.
\end{proof}
\else
\fi
When we restrict domain and all operations of ${\mathbb A}$ to the set $\textsc{Hom}(\bR,\bGamma)$ we obtain a new algebra ${\mathbb A}|_{\textsc{Hom}(\bR,\bGamma)}$.
Thus, searching for an assignment $h\in \textsc{Hom}(\bR,\bGamma^{\mathcal{B}})$ is also equivalent to searching for an algebraic structure on $\textsc{Hom}(\bR,\bGamma)$ of a special kind.
Our paper is dedicated to the study of the relationship between the problems $\textsc{CSP}(\bGamma)$ and $\textsc{CSP}(\bGamma^{\mathcal{B}})$.

The examples that we list in Appendix show that the tractability of $\textsc{CSP}(\bGamma^{\mathcal{B}})$ is an interesting issue in its own right. In the next section we will show that there is a relationship between complexities of $\textsc{CSP}(\bGamma)$ and $\textsc{CSP}(\bGamma^{\mathcal{B}})$.  Then we will return to our examples in Section~\ref{sec:Main}.
\else
\fi
\ifCOM
\section{Overview of results}\label{overview}
Before describing technicalities let us briefly outline the main results established in the paper. 
In Section~\ref{sec:relation} we are interested in the complexity of $\textsc{CSP}(\bGamma^{\mathcal{B}})$ and how it is affected by the structure of $\bGamma$ and $\mathcal{B}$. 

A set of assignments $\{h:V\to D\}$ of a set of variables $V = \{(o_i,a)| i\in [k], a\in D^{n_i}\}$ is in one-to-one correspondence with a set of algebras with a signature $\sigma=(n_1, \cdots, n_k)$. Thus, any primitive positive formula $\Psi(V) = \exists \, A \Phi(V,A)$ over $\Gamma$ with a set of variables $V\cup A$ (where $A$ is a set of additional variables) defines some collection of algebras $\mathcal{B}[\Psi]\subseteq \mathcal{A}_D^\sigma$.
We prove theorems~\ref{simple-red} and~\ref{strong-red} based on a natural reduction of $\textsc{CSP}(\bGamma^{\mathcal{B}})$ to $\textsc{CSP}(\bGamma)$ and deduce the corollary~\ref{BtoG} that is equivalent to the following proposition:

{\bf Proposition 1.} For any primitive positive formula $\Psi(V)$ over $\Gamma$, $\textsc{CSP}(\bGamma^{\mathcal{B}[\Psi]})$ is Karp reducible to $\textsc{CSP}(\bGamma)$.

In fact, many interesting algebra collections are pp-definable over $\{=\}$ only, as in the following example.
\begin{remark} \label{simpleEq}  Let $T$ be a set of identities such that every identity in $T$ is of the form $o_i(x_1,...,x_{n_i})=o_j(x_{n_i+1},...,x_{n_i+n_j})$, where $i,j \in \big\{1,...,k\big\}$, $x_1,...,x_{n_i+n_j}$ are variables with repetitions allowed (such identities are sometimes called Maltsev conditions of height 1). Suppose $\mathcal{B}$ is defined to be the set of algebras from $\mathcal{A}_{D}^\sigma$ whose operations satisfy the identities from $T$. By construction, the pp-formula 
$$\Psi_T(V) = \bigwedge\limits_{\begin{matrix}
\scriptscriptstyle (o_i(x_{1:n_i})=o_j(x_{n_i+1:n_i+n_j}))\in T \\
\scriptscriptstyle h:\{x_1,...,x_{n_i+n_j}\}\to D
\end{matrix}} [(o_i,h(x_1),...,h(x_{n_i}))=(o_j,h(x_{n_i+1}),...,h(x_{n_i+n_j}))],$$ 
defines $\mathcal{B} = \mathcal{B}[\Psi_T]$.

Moreover, if $\Gamma$ contains all singletons, i.e. $\{a\}\in \Gamma$ for any $a\in D$, then adding atoms
$$
\bigwedge\limits_{h:\{x_1,...,x_{n_i},x_j\}\to D}[(o_i, h(x_1),...,h(x_{n_i})) = h(x_j)]
$$
to $\Psi_T (V)$ restricts an algebra $o^{\mathcal{B}[\Psi_T]}_i$ to satisfy $o_i(x_1,...,x_{n_i}) = x_j$. Thus, well-known equational classes corresponding to tractable cases of fixed-template CSP (e.g. $T = \{o_1(x,x,y)=o_1(y,x,x)=y\}$ for Maltsev algebras, $T = \{o_1(x,x,y)=o_1(y,x,x)=o_1(x,y,x)=x\}$ for near-unanimity algebras etc.) define algebra collections $\mathcal{B}[\Psi_T]$ pp-definable over $\{=\}\cup \{\{a\}| a\in D\}$. Thus, $\textsc{CSP}(\bGamma^{\mathcal{B}})$ is Karp reducible to $\textsc{CSP}(\bGamma)$ for all such collections. For example, let $\Gamma$ be any tractable constraint language over $D$ and $\mathcal{B} = \{(D, \phi)| \phi: D^2\to D{\rm \,\,is\,\,commutative}\}$. Then, $\textsc{CSP}(\bGamma^{\mathcal{B}})$ is tractable, because $\textsc{CSP}(\bGamma)$ is tractable.
\end{remark} 

The last remark shows that for well-known collections consisting of tractable algebras (Maltsev, near unanimity, binary commutative etc), $\textsc{CSP}(\bGamma^{\mathcal{B}})$ cannot be harder than $\textsc{CSP}(\bGamma)$. This makes promising the idea of reducing $\textsc{CSP}(\bGamma)$ to $\textsc{CSP}(\bGamma^{\mathcal{B}})$. The main tool in such a reduction is Theorem~\ref{Btract} which can be reformulated as:

{\bf Proposition 2.} If $\mathcal{B}\subseteq \mathcal{A}_D^\sigma$ is a collection of tractable algebras, then $\textsc{CSP}^+_{\bGamma^{\mathcal{B}}}(\bGamma)$ is tractable for any $\bGamma$.

\begin{remark}
In examples 4.1-4.6, $\textsc{CSP}(\bGamma^{\mathcal{B}})$ is tractable for any (possibly NP-hard) $\bGamma$. Suppose 
an instance of $\textsc{CSP}(\bGamma)$ is homomorphic to $\bGamma^{\mathcal{B}}$ (this homomorphism we can find efficiently). Note that in all those examples, $\mathcal{B}$ consists of tractable algebras. Thus, the latter proposition can be directly applied in the listed cases. 
The tractability of $\textsc{CSP}^+_{\bGamma^{\mathcal{B}}}(\bGamma)$ means that we can solve $\textsc{CSP}(\bGamma)$, because an instance is equipped with a homomorphism to $\bGamma^{\mathcal{B}}$. 
\end{remark}

A set of instances homomorphic to $\bGamma^{\mathcal{B}}$ can be naturally extended, using Theorem~\ref{generalization}, which simplifies to the following proposition.

{\bf Proposition 3.} Let $\mathcal{B}\subseteq \mathcal{A}_D^\sigma$ be a collection of tractable algebras. Let $p:\{\bR\}\rightarrow \{\bR\}\cup \{{\bf UNDEF}\}$ be a preprocessing of instances of $\textsc{CSP}(\bGamma)$ that takes a polynomial time on the size of an input. Then, solving $\textsc{CSP}(\bGamma)$ with an instance from 
$$\Sigma = \{\bR | \bGamma_{p(\bR)}\to \big(\bGamma^{\mathcal{B}}\big)_{p(\bR)}\},$$
can be polynomially reduced to $\textsc{CSP}(\bGamma^{\mathcal{B}})$.

If $p(\bR) = \bR$ (i.e. we do not make any preprocessing), then the proposition helps in solving a CSP if its instance is from:
$$\Sigma = \{\bR | \bGamma_{\bR}\to \big(\bGamma^{\mathcal{B}}\big)_{\bR}\}\supseteq \{\bR| \bR\to \bGamma^{\mathcal{B}}\}.$$
\else
\fi

\vspace{-10pt}
\section{The complexity of $\textsc{CSP}^+_{\bGamma^{\mathcal{B}}}(\bGamma)$}\label{sec:Main}
\vspace{-5pt}
Again, let $\Gamma$ be a set of relations over $D$ 
and $\mathcal{B}\subseteq \mathcal{A}_{D}^\sigma$. 
In the previous section we gave an example of a set of algebras $\mathcal{B}$ for which $\textsc{CSP}(\bGamma^{\mathcal{B}})$ is tractable for any $\bGamma$ (possibly NP-hard). More such examples can be found in Appendix. 
Therefore, asking what can be achieved by substituting $\bGamma$ with another template $\bGamma'$ (e.g. $\bGamma' = \bGamma^{\mathcal{B}}$ for an appropriate $\mathcal{B}$), and the consequences of such substitutions,  may be a promising research direction.

First we will study the following problem: if we managed to find a homomorphism from the input structure $\bR$ to $\bGamma^{\mathcal{B}}$, when can it help us to find $h: \bR\to \bGamma$? In subsection~\ref{sec:construction} we called this problem {\em the CSP with an input prototype} and denoted as $\textsc{CSP}^+_{\bGamma^{\mathcal{B}}}(\bGamma)$. It was prove to be equivalent to  $\textsc{CSP}(\bGamma_{\bGamma^{\mathcal{B}}})$. Thus, we will start with identifying conditions for the tractability of $\bGamma_{\bGamma^{\mathcal{B}}}$.

\vspace{-10pt}
\subsection{Conditions for the tractability of $\bGamma_{\bGamma^{\mathcal{B}}}$}
\vspace{-5pt}

From the definition of lifted languages it is clear that for any relation $P \in \Gamma_{\bGamma^{\mathcal{B}}}$ there exists a relation $\varrho\in\Gamma$ such that $P=\varrho (\mathbb{A}_1, \cdots, \mathbb{A}_p)$ where $(\mathbb{A}_1, \cdots, \mathbb{A}_p)\in \varrho^{\mathcal{B}}$. Since $\varrho^{\mathcal{B}}\subseteq {\mathcal{B}}^p$, we have 
$\mathbb{A}_1\in {\mathcal{B}}, \cdots, \mathbb{A}_p\in {\mathcal{B}}$. From the definition of $\varrho (\mathbb{A}_1, \cdots, \mathbb{A}_p)$ (see~\eqref{pred-func}) we obtain that
$$\varrho (\mathbb{A}_1, \cdots, \mathbb{A}_p)\subseteq \{(\mathbb{A}_1, x_1)| x_1\in D\}\times \cdots \times \{(\mathbb{A}_p, x_p)| x_p\in D\}.$$

For any algebra $\mathbb{A}\in \mathcal{B}$ let us denote by $\mathbb{A}^{\bf{c}}$ a copy of $\mathbb{A}$, but with all operations redefined on its new unique domain $D^{\mathbb{A}^{\bf{c}}} = \{(\mathbb{A}, x)| x\in D\}$ by
$$
o^{\mathbb{A}^{\bf{c}}}_l\big((\mathbb{A}, x_1), \cdots, (\mathbb{A}, x_{n_l})\big)
 = (\mathbb{A}, o^{\mathbb{A}}_l(x_1, \cdots, x_{n_l})).
$$
If we introduce a new collection of domains $\big\{D^{\mathbb{A}^{\bf{c}}}|\mathbb{A}\in \mathcal{B}\big\}$ parameterized by algebras from $\mathcal{B}$, then
$$\varrho (\mathbb{A}_1, \cdots, \mathbb{A}_p)\subseteq D^{\mathbb{A}^{\bf{c}}_1}\times \cdots \times D^{\mathbb{A}^{\bf{c}}_p}$$
becomes a multi-sorted relation with signature $(\mathbb{A}_1, \cdots, \mathbb{A}_p)$.
Thus, in the new notations, any relation from $ \Gamma_{\bGamma^{\mathcal{B}}}$ becomes multi-sorted over  a collection of sets $\big\{D^{\mathbb{A}^{\bf{c}}}|\mathbb{A}\in \mathcal{B}\big\}$. Also, denote $\mathcal{B}^{\bf{c}} = \big\{\mathbb{A}^{\bf{c}} | \mathbb{A}\in \mathcal{B}\big\}$.

\begin{definition}
For a collection $\mathcal{B}$, we define $\textsc{MInv}(\mathcal{B}^{\bf{c}})$ as the set of all multi-sorted
relations $\varrho$ over a collection of sets $\big\{D^{\mathbb{A}^{\bf{c}}}|\mathbb{A}\in \mathcal{B}\big\}$ such that for any $l\in [k]$, the $n_l$-ary multi-sorted operation $\big\{o^{\mathbb{A}^{\bf{c}}}_l| \mathbb{A}\in \mathcal{B}\big\}$ is a polymorphism of $\varrho$.
\end{definition}

\begin{lemma} \label{lemma:key}$\Gamma_{\bGamma^{\mathcal{B}}}$ understood as a multi-sorted language over a collection of domains $\big\{D^{\mathbb{A}^{\bf{c}}}| \mathbb{A}\in \mathcal{B}\big\}$ is a subset of $\textsc{MInv}(\mathcal{B}^{\bf{c}})$.
\end{lemma}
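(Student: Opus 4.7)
The plan is to unfold the definitions of the lifted language, of $\rho^{\mathcal{B}}$, and of the copy algebra $\mathbb{A}^{\bf{c}}$, and verify the polymorphism condition relation-by-relation. Applying the construction from Section~2.4 with template $\bGamma$ and input structure $\bGamma^{\mathcal{B}}$ (which has the signature of $\bGamma$), every relation in $\Gamma_{\bGamma^{\mathcal{B}}}$ falls into exactly one of two types: (i) a unary domain relation $D^{\mathbb{A}^{\bf{c}}}$ for some $\mathbb{A}\in\bigcup_{i\in I}\mathcal{B}_i$, or (ii) a relation of the form $\rho(\mathbb{A}_1,\ldots,\mathbb{A}_p)$, where $\rho\in\Gamma$ has signature $(i_1,\ldots,i_p)$ and $(\mathbb{A}_1,\ldots,\mathbb{A}_p)\in\rho^{\mathcal{B}}$. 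For case (i) the claim is immediate: any operation $o^{\mathbb{A}^{\bf{c}}}_i$ maps $D^{\mathbb{A}^{\bf{c}}}$ into itself by definition.

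For case (ii), I would fix $i\in[k]$ and check that the multi-sorted operation $\{o^{\mathbb{A}^{\bf{c}}}_i\}_{\mathbb{A}}$ preserves $\rho(\mathbb{A}_1,\ldots,\mathbb{A}_p)$, which has signature $(\mathbb{A}_1,\ldots,\mathbb{A}_p)$. Take an arbitrary $n_i\times p$ matrix $[\overline{x}_1,\ldots,\overline{x}_p]$ whose rows belong to $\rho(\mathbb{A}_1,\ldots,\mathbb{A}_p)$. By definition of $\rho(\bv)$, the $r$-th row has the shape $((\mathbb{A}_1,y_{r,1}),\ldots,(\mathbb{A}_p,y_{r,p}))$ with $(y_{r,1},\ldots,y_{r,p})\in\rho$; hence column $\ell$ is $\overline{x}_\ell=((\mathbb{A}_\ell,y_{1,\ell}),\ldots,(\mathbb{A}_\ell,y_{n_i,\ell}))\in (D^{\mathbb{A}_\ell^{\bf{c}}})^{n_i}$. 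Because $\mathbb{A}_\ell^{\bf{c}}$ is obtained from $\mathbb{A}_\ell$ by transporting every operation along the bijection $x\leftrightarrow(\mathbb{A}_\ell,x)$,
$$o^{\mathbb{A}_\ell^{\bf{c}}}_i(\overline{x}_\ell)=\bigl(\mathbb{A}_\ell,\; o^{\mathbb{A}_\ell}_i(y_{1,\ell},\ldots,y_{n_i,\ell})\bigr).$$

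To finish I would invoke the definition of $\rho^{\mathcal{B}}$: since $(\mathbb{A}_1,\ldots,\mathbb{A}_p)\in\rho^{\mathcal{B}}$, the tuple $(o^{\mathbb{A}_1}_i,\ldots,o^{\mathbb{A}_p}_i)$ is a component-wise polymorphism of $\rho$, so feeding it the $n_i$ rows $(y_{r,1},\ldots,y_{r,p})\in\rho$ yields $(o^{\mathbb{A}_1}_i(y_{1,1},\ldots,y_{n_i,1}),\ldots,o^{\mathbb{A}_p}_i(y_{1,p},\ldots,y_{n_i,p}))\in\rho$. Combining with the previous display gives $(o^{\mathbb{A}_1^{\bf{c}}}_i(\overline{x}_1),\ldots,o^{\mathbb{A}_p^{\bf{c}}}_i(\overline{x}_p))\in\rho(\mathbb{A}_1,\ldots,\mathbb{A}_p)$, which is exactly the multi-sorted polymorphism condition and places $\rho(\mathbb{A}_1,\ldots,\mathbb{A}_p)$ in $\textsc{MInv}(\mathcal{B}^{\bf{c}})$.

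The argument is essentially a direct bookkeeping exercise, so I do not foresee a genuine mathematical obstacle; the only point that needs care is the index layering between rows and columns of the verification matrix on one side and the three parallel indexings (components $\ell$ of $\rho$, algebra labels $\mathbb{A}_\ell$, and copy-domain elements $(\mathbb{A}_\ell,y_{r,\ell})$) on the other. Getting this bookkeeping right is what makes the conclusion $(o^{\mathbb{A}_1^{\bf{c}}}_i,\ldots,o^{\mathbb{A}_p^{\bf{c}}}_i)$-invariance of $\rho(\mathbb{A}_1,\ldots,\mathbb{A}_p)$ match, coordinate by coordinate, the component-wise polymorphism property that $(\mathbb{A}_1,\ldots,\mathbb{A}_p)\in\rho^{\mathcal{B}}$ already guarantees.
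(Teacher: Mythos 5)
Your proof is correct and follows essentially the same route as the paper's: every non-unary relation in $\Gamma_{\bGamma^{\mathcal{B}}}$ has the form $\rho(\mathbb{A}_1,\ldots,\mathbb{A}_p)$ with $(\mathbb{A}_1,\ldots,\mathbb{A}_p)\in\rho^{\mathcal{B}}$, and the component-wise polymorphism property of $(o^{\mathbb{A}_1}_i,\ldots,o^{\mathbb{A}_p}_i)$ on $\rho$ transports along the bijection $x\leftrightarrow(\mathbb{A}_\ell,x)$ to give invariance of $\rho(\mathbb{A}_1,\ldots,\mathbb{A}_p)$ under $(o^{\mathbb{A}_1^{\bf c}}_i,\ldots,o^{\mathbb{A}_p^{\bf c}}_i)$. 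You merely make two things explicit that the paper compresses into a one-line ``and therefore'': the row/column bookkeeping through the bijection, and the separate (trivial) check for the unary domain relations $D^{\mathbb{A}^{\bf c}}$.
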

\begin{proof}
Let us check that $\varrho_i (\mathbb{A}_1,...,\mathbb{A}_p)\in \textsc{MInv}(\mathcal{B}^{\bf{c}})$ whenever $(\mathbb{A}_1,...,\mathbb{A}_p)\in \varrho_i^{\mathcal{B}}$. The latter premise implies that for $h\in [k]$, $\big(o^{\mathbb{A}_1}_h,...,o^{\mathbb{A}_p}_h\big)$ is a component-wise polymorphism of $\varrho_i$, and therefore, 
$\big(o^{\mathbb{A}^{{\bf c}}_1}_h,...,o^{\mathbb{A}^{{\bf c}}_p}_h\big)$ is a component-wise polymorphism of $\varrho_i (\mathbb{A}_1,...,\mathbb{A}_p)$. From the last we conclude that
the multi-sorted operation $\big\{o^{\mathbb{A}^{\bf{c}}}_h|\mathbb{A}\in \mathcal{B}\big\}$ is a polymorphism of $\varrho_i (\mathbb{A}_1,...,\mathbb{A}_p)$, or $\varrho_i (\mathbb{A}_1, \cdots,\mathbb{A}_p)\in \textsc{MInv}(\mathcal{B}^{\bf{c}})$. \qed
\end{proof}

The following definition is very natural.
\begin{definition}\label{col-tract}
A collection $\mathcal{B}$ is called {\bf tractable} if $\textsc{MInv}(\mathcal{B}^{\bf{c}})$ is a tractable constraint language.
\end{definition}
Thus, using lemma~\ref{lemma:key} and definition~\ref{col-tract} we obtain the following key result.
\begin{theorem}\label{Btract}
If a collection $\mathcal{B}$ is tractable, then $\textsc{MCSP} (\Gamma_{\bGamma^{\mathcal{B}}})$ is tractable.
\end{theorem}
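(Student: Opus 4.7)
The plan is to reduce $\textsc{CSP}(\bGamma_{\bGamma^{\mathcal{B}}})$ directly to a multi-sorted CSP over the language $\textsc{MInv}(\mathcal{B}^{\bf{c}})$, exploiting the inclusion furnished by Lemma~\ref{lemma:key}. First I would note that $\bGamma^{\mathcal{B}}$ is a fixed finite relational structure, so the lifted language $\Gamma_{\bGamma^{\mathcal{B}}}$ is itself finite; any instance of $\textsc{CSP}(\bGamma_{\bGamma^{\mathcal{B}}})$ can therefore be read as an instance of $\textsc{MCSP}$ over the collection of sorts $\{D^{\mathbb{A}^{\bf{c}}} : \mathbb{A} \in \bigcup_{i\in I}\mathcal{B}_i\}$, since the construction of $\rho(\bv)$ in Section~\ref{sec:construction} already tags each coordinate of every relation with its intended domain.

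Next I would invoke Lemma~\ref{lemma:key} to conclude $\Gamma_{\bGamma^{\mathcal{B}}} \subseteq \textsc{MInv}(\mathcal{B}^{\bf{c}})$. One tiny auxiliary fact to verify along the way is that the unary ``domain'' relations $D^{\mathbb{A}^{\bf{c}}}$ which are also part of the lifted language lie in $\textsc{MInv}(\mathcal{B}^{\bf{c}})$; this is immediate because each operation $o^{\mathbb{A}^{\bf{c}}}_j$ has codomain $D^{\mathbb{A}^{\bf{c}}}$ and hence preserves this unary relation trivially. With the inclusion in hand, the hypothesis that $\mathcal{B}$ is tractable, namely that $\textsc{MCSP}(\Gamma_0)$ is polynomial-time solvable for every finite $\Gamma_0 \subseteq \textsc{MInv}(\mathcal{B}^{\bf{c}})$, applied to the finite $\Gamma_0 = \Gamma_{\bGamma^{\mathcal{B}}}$, yields the desired tractability of $\textsc{CSP}(\bGamma_{\bGamma^{\mathcal{B}}})$.

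I do not anticipate any serious obstacle: the theorem is essentially a bookkeeping corollary of Lemma~\ref{lemma:key} and the definition of tractability of $\mathcal{B}$. The only point that deserves mild care is the shift in viewpoint between reading $\bGamma_{\bGamma^{\mathcal{B}}}$ as a single relational structure with domain $\bigcup_{i\in I}\mathcal{B}_i$ and reading its language multi-sortedly over the collection $\{D^{\mathbb{A}^{\bf{c}}}\}$. Because the sorts $D^{\mathbb{A}^{\bf{c}}}$ are pairwise disjoint by construction and every relation of $\Gamma_{\bGamma^{\mathcal{B}}}$ respects its signature, the two viewpoints describe the same computational problem up to a trivial linear-time preprocessing step that assigns each variable its unique sort consistently with the scopes of its constraints.
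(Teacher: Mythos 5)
Your proposal is correct and follows essentially the same route the paper takes: it reads the lifted language $\Gamma_{\bGamma^{\mathcal{B}}}$ multi-sortedly over the sorts $\{D^{\mathbb{A}^{\bf c}}\}$, invokes Lemma~\ref{lemma:key} for the inclusion in $\textsc{MInv}(\mathcal{B}^{\bf c})$, and applies the definition of a tractable collection to the finite sublanguage $\Gamma_{\bGamma^{\mathcal{B}}}$. The paper states this as an immediate consequence of the lemma and the definition; your explicit check of the unary sort relations $D^{\mathbb{A}^{\bf c}}$ (which the lemma's proof leaves implicit) is a small but welcome bit of extra care.
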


\ifTR
\begin{remark}\label{equivalence-to-individual} We gave a definition of the tractable collection $\mathcal{B}$ that serves our purposes.  
It can be shown that a collection $\mathcal{B}$ is tractable if and only if every algebra $\mathbb{A}\in \mathcal{B}$ is tractable (i.e. $\textsc{Inv}(\{o^{\mathbb{A}}_l| l\in [k]\})$ is a tractable language). The fact is well-known in CSP studies, so we omit a proof of it. \end{remark}
\else
\fi
\begin{remark} In examples \ref{ExCohen}-\ref{TWNU}, given in Appendix, sets of algebras $\mathcal{B}$, $\mathcal{B}_1$, $\mathcal{B}_{\rm com}$, $\mathcal{B}_{\rm nu}$, $\mathcal{B}_{\rm M}$, $\mathcal{B}_{\rm wnu}$ are tractable. 
Then, Theorem~\ref{Btract} and Lemma~\ref{prototype} give us the tractability of $\textsc{CSP}^+_{\bGamma^{\mathcal{B}}}(\bGamma)$, $\textsc{CSP}^+_{\bGamma^{\mathcal{B}_1}}(\bGamma)$, $\textsc{CSP}^+_{\bGamma^{\mathcal{B}_{\rm com}}}(\bGamma)$, $\textsc{CSP}^+_{\bGamma^{\mathcal{B}_{\rm nu}}}(\bGamma)$, $\textsc{CSP}^+_{\bGamma^{\mathcal{B}_{\rm M}}}(\bGamma)$, $\textsc{CSP}^+_{\bGamma^{\mathcal{B}_{\rm wnu}}}(\bGamma)$. The only problem that remains now is finding a homomorphism from an input structure to $\bGamma^{\mathcal{B}}$, $\bGamma^{\mathcal{B}_1}$, $\bGamma^{\mathcal{B}_{\rm com}}$, $\bGamma^{\mathcal{B}_{\rm nu}}$, $\bGamma^{\mathcal{B}_{\rm M}}$, $\bGamma^{\mathcal{B}_{\rm wnu}}$. In the examples we discussed that these tasks are all tractable.
\end{remark}

\begin{theorem} \label{mainreduction1} If a collection $\mathcal{B}$ is tractable and $\bGamma\to \bGamma^{\mathcal{B}}$, then 
$\textsc{CSP}(\bGamma)$ is polynomial\,-\,time Turing reducible to $\textsc{CSP}(\bGamma^{\mathcal{B}})$.
\end{theorem}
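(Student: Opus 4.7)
Fix once and for all a homomorphism $\phi : \bGamma \to \bGamma^{\mathcal{B}}$, whose existence is guaranteed by the hypothesis. Given an instance $\bR$ of $\textsc{CSP}(\bGamma)$, I produce a homomorphism $h : \bR \to \bGamma$ (or certify that none exists) by a two-phase procedure. In the first phase I call the oracle for $\textsc{CSP}(\bGamma^{\mathcal{B}})$ on input $\bR$. If the oracle reports no solution, I output ``no'' for the original instance: any $h : \bR \to \bGamma$ would yield the homomorphism $\phi \circ h : \bR \to \bGamma^{\mathcal{B}}$, contradicting the oracle's answer. Otherwise the oracle returns some $\chi : \bR \to \bGamma^{\mathcal{B}}$.

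In the second phase, the pair $(\bR,\chi)$ is by definition a valid instance of $\textsc{CSP}^+_{\bGamma^{\mathcal{B}}}(\bGamma)$. By Lemma~\ref{prototype} this problem is polynomially equivalent to $\textsc{CSP}(\bGamma_{\bGamma^{\mathcal{B}}})$, and by Theorem~\ref{Btract} the tractability of the collection $\mathcal{B}$ implies that $\textsc{CSP}(\bGamma_{\bGamma^{\mathcal{B}}})$ is tractable. I run this polynomial-time algorithm on $(\bR,\chi)$: it returns a homomorphism $h:\bR\to\bGamma$ if one exists, and otherwise reports that none exists. Either outcome is the correct answer for $\bR$ as an instance of $\textsc{CSP}(\bGamma)$. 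The whole procedure makes one oracle call to $\textsc{CSP}(\bGamma^{\mathcal{B}})$ followed by polynomial-time postprocessing, so it is a polynomial-time Turing reduction.

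The key conceptual point is that the algorithm of Theorem~\ref{Btract} is not asked to produce an $h$ that is in any particular relationship with $\chi$: the role of $\chi$ is purely to supply the ``input prototype'' needed to invoke Lemma~\ref{prototype} and activate the tractability of $\textsc{MInv}(\mathcal{B}^{\bf c})$. The hypothesis $\bGamma\to\bGamma^{\mathcal{B}}$ is used exclusively to rule out the bad case in which the first oracle call returns ``no'' while some $h:\bR\to\bGamma$ nevertheless exists. Because all three ingredients (a fixed $\phi$, Lemma~\ref{prototype}, Theorem~\ref{Btract}) are already in place, I do not expect a substantive obstacle; the proof is essentially a careful assembly of these facts, with the only care needed being the observation above that correctness of the second phase is independent of the specific $\chi$ chosen in the first.
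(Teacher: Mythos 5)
Your proof is correct and follows essentially the same approach as the paper: call the oracle on $\bR$ over the relaxed template $\bGamma^{\mathcal{B}}$, use the hypothesis $\bGamma\to\bGamma^{\mathcal{B}}$ to justify answering ``no'' when the oracle fails, and otherwise combine Lemma~\ref{prototype} with Theorem~\ref{Btract} to recover $h:\bR\to\bGamma$ in polynomial time. Your write-up is a bit more explicit about the composition $\phi\circ h$ and about $\chi$ serving only as the input prototype, but there is no substantive difference.
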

\begin{proof}
For an instance $\bR$ of $\textsc{CSP}(\bGamma)$, if $\bGamma\to \bGamma^{\mathcal{B}}$, then we can replace the right template $\bGamma$ with $\bGamma^{\mathcal{B}}$ and obtain a relaxed version of the initial CSP. Suppose that we are able to solve $\textsc{CSP} (\bGamma^{\mathcal{B}})$. If the solution set of the relaxed problem is empty, then it all the more is empty for the initial one. But if we manage to find a single homomorphism from the input structure $\bR$ to $\bGamma^{\mathcal{B}}$, then the problem of finding a homomorphism $\bR\to \bGamma$ can be presented as an instance of $\textsc{CSP} _{\bGamma^{\mathcal{B}}}^+(\bGamma)$, or, by Lemma~\ref{prototype}, of $\textsc{MCSP} (\bGamma_{\bGamma^{\mathcal{B}}})$. Now, from the tractability of $\mathcal{B}$ and Theorem~\ref{Btract} we get that $\textsc{MCSP} (\bGamma_{\bGamma^{\mathcal{B}}})$ is tractable and we efficiently find a homomorphism $h:\bR\to \bGamma$. \qed
\end{proof}

Unfortunately, it is hard to satisfy the condition $\bGamma\to \bGamma^{\mathcal{B}}$ unless ${\mathcal{B}}$  contains constants. In examples \ref{tournament}-\ref{TWNU} (and in an example from Section~\ref{ExCohen}) it is not satisfied.

\vspace{-10pt}
\section{Reductions of $\textsc{CSP}(\bGamma^{\mathcal{B}})$ to $\textsc{CSP}(\bGamma)$}\label{sec:relation}
\vspace{-5pt}
Let us now find some conditions on $\mathcal{B}$ under which $\textsc{CSP}(\bGamma^{\mathcal{B}})$ is a fragment of $\textsc{CSP}(\bGamma)$. 
Again, we are given $\bGamma = \big(D, \varrho_1, ..., \varrho_s\big)$ where $\varrho_l$ is a relation 
over $D$, $l\in [s]$ and $\mathcal{B}\subseteq \mathcal{A}_{D}^\sigma$. 
In this section we will show that under very natural conditions on $\mathcal{B}$, any instance of $\textsc{CSP}(\bGamma^{\mathcal{B}})$ can be 
turned into an instance of $\textsc{CSP}(\bGamma)$. Let us introduce some natural definitions that will serve our purpose.\footnote{The notion of the trace introduced below is used in a proof of Galois theory for functional and relational clones.} Let $D = [d]$. Given $n$, let $\alpha_n(1)$, $\alpha_n(2)$, ..., $\alpha_n(d^n)$ be a lexicographic ordering of $D^n$. 

\begin{definition}\label{trace-def} The trace of $\mathcal{B}$, denoted $Tr(\mathcal{B})$, is the relation $$\Big\{\big(o_1^{{\mathbb A}}(\alpha_{n_1}(1)), \cdots, o_1^{{\mathbb A}}(\alpha_{n_1}(d^{n_1})), \cdots, o^{{\mathbb A}}_k(\alpha_{n_k}(1)), \cdots, o^{{\mathbb A}}_k(\alpha_{n_k}(d^{n_k}))\big) | {\mathbb A}\in\mathcal{B}\Big\}.$$ 
The arity of $Tr(\mathcal{B})$ is ${\textnormal {\textkappa}}(\mathcal{B}) = \sum_{s=1}^{k} d^{n_s}$.
\end{definition}

Given a number $n$ and an $m$-ary relation $\varrho$ over $D$, let us denote $\varrho^n$ the set of all tuples $(\overline{x}_1, ..., \overline{x}_m)$, where $\overline{x}_l\in D^{n}, l\in [m]$, such that all rows of the matrix $[\overline{x}_1, ..., \overline{x}_m]$ are in $\varrho$. 
Note that $\varrho^n$ is a relation over $D^n$. According to the standard terminology of universal algebra, $\varrho^n$ is a direct product $\prod_{i=1}^n\varrho$ of $m$-ary relations. 
It satisfies $|\varrho^n| = |\varrho|^n$. 
Also, let $\bGamma\smallfrown\mathcal{B} = \big(D, \varrho_1, \cdots, \varrho_s, Tr(\mathcal{B}) \big)$.

\begin{theorem}\label{simple-red} $\textsc{CSP}(\bGamma^{\mathcal{B}})$ is Karp reducible to $\textsc{CSP}(\bGamma\smallfrown\mathcal{B})$.
\end{theorem}
\ifTR
\begin{proof}
Suppose that we are given an instance of $\textsc{CSP}(\bGamma^{\mathcal{B}})$, i.e. an input structure $\bR=(V,r_1,\ldots,r_s)$. 
Our goal is to find a homomorphism $h: \bR\rightarrow \bGamma^{\mathcal{B}}$, i.e. to assign every variable $v\in V$ an algebra $h(v)\in\mathcal{B}$ in such a way that certain constraints are satisfied. Let us define an instance of $\textsc{CSP}(\bGamma\smallfrown\mathcal{B})$ with a set of variables
$$
W = \{{\mathsf a}[v,i,\alpha_{n_i}(j)]\,| v\in V, i\in [k], j\in [d^{n_i}]\}.
$$
All variables take their values in $D$. A value assigned to the variable ${\mathsf a}[v,i,\alpha_{n_i}(j)]$ corresponds to $o^{h(v)}_i(\alpha_{n_i}(j))$ for $h: \bR\rightarrow \bGamma^{\mathcal{B}}$.
For any $v\in V$, an assignment of a tuple of variables 
\begin{equation*}
\begin{split}
T(v) = \big({\mathsf a}[{v,1,\alpha_{n_1}(1)}], ..., {\mathsf a}[{v,1,\alpha_{n_1}(d^{n_1})}], 
..., {\mathsf a}[{v,k,\alpha_{n_k}(1)}], ..., {\mathsf a}[{v,k,\alpha_{n_k}(d^{n_k})}]\big)
\end{split}
\end{equation*}
is constrained to be in $Tr(\mathcal{B})$. Any such constraint models an assignment of an algebra from $\mathcal{B}$ to $v$ in the initial instance of $\textsc{CSP}(\bGamma^{\mathcal{B}})$. Indeed, an assignment of an algebra $h(v)\in \mathcal{A}_{D}^\sigma$ to $v$ corresponds to assigning the tuple $T(v) $ the value
$$\big(o^{h(v)}_1(\alpha_{n_1}(1)), ..., o^{h(v)}_1(\alpha_{n_1}(d^{n_1})), ..., o^{h(v)}_k(\alpha_{n_k}(1)), ..., o^{h(v)}_k(\alpha_{n_k}(d^{n_k}))\big).$$
The latter tuple is in $Tr(\mathcal{B})$ if and only if $h(v)\in\mathcal{B}$.

The second type of constraints are 
$$
\langle \big({\mathsf a}[{v_1,j,\overline{x}_1}], ..., {\mathsf a}[{v_p,j,\overline{x}_{p}}]\big), \varrho_l \rangle
$$
for all $l\in [s]$, $(v_1,...,v_p)\in r_l$, $j\in [k]$ and $(\overline{x}_1, ..., \overline{x}_{p})\in (\varrho_l)^{n_j}$.

In the initial instance of $\textsc{CSP}(\bGamma^{\mathcal{B}})$ we have constraints of the following kind: assigned values for a tuple $(v_1,...,v_p)\in r_l$ should be in $\varrho^{\mathcal{B}}_l$, i.e. $\big(o^{h(v_1)}_j, ..., o^{h(v_p)}_j\big)$ should component-wise preserve $\varrho_l$ (for $j\in [k]$). This means that for any $(\overline{x}_1, ..., \overline{x}_{p})\in (\varrho_l)^{n_j}$, we have $\big(o^{h(v_1)}_j(\overline{x}_1), ..., o^{h(v_p)}_j(\overline{x}_{p})\big)\in \varrho_l$. Thus, for any $l\in [s], j\in [k]$, and any $(v_1,...,v_p)\in r_l$, $(\overline{x}_1, ..., \overline{x}_{p})\in (\varrho_l)^{n_j}$, we restrict $\big({\mathsf a}[{v_1,j,\overline{x}_1}], ..., {\mathsf a}[{v_p,j,\overline{x}_{p}}]\big)$ to take its values in $\varrho_l$, and a conjunction of all those constraints is equivalent to the initial constraints of $\textsc{CSP}(\bGamma^{\mathcal{B}})$.

Thus, we described, given an instance of $\textsc{CSP}(\bGamma^{\mathcal{B}})$, how to define a constraint satisfaction problem with relations in constraints taken either from $\{Tr(\mathcal{B})\}$, or from $\Gamma$. By construction, there is a one-to-one correspondence between solutions of the initial CSP and the constructed CSP. \qed
\end{proof}
\else
\fi

Let us denote by $\Gamma^\ast$ the set of all relations over $D$ that can be expressed as a primitive positive formula over $\Gamma$, i.e. as a syntactically correct formula of the form $\exists {\mathbf v} \varrho_1({\mathbf v}_1)\wedge \cdots \wedge \varrho_l({\mathbf v}_l)$ where $\varrho_i\in \Gamma, i\in [l]$ and ${\mathbf v}, {\mathbf v}_i, i\in [l]$ are lists of variables. From Theorem~\ref{simple-red} and~\cite{Jeavons:1998} we conclude:
\begin{corollary}\label{BtoG} If $Tr(\mathcal{B})\in \Gamma^\ast$, then $\textsc{CSP}(\bGamma^{\mathcal{B}})$ is Karp reducible to $\textsc{CSP}(\bGamma)$.
\end{corollary}

\ifTR
Let us translate the premise of Corollary~\ref{BtoG} to the language of polymorphisms.
A well-known fact proved by Geiger~\cite{geiger1968} and by Bodnarchuk-Kalu{\v{z}}nin-Kotov-Romov~\cite{Bodnarchuk} is that $\varrho\in\Gamma^\ast$ if and only if $\varrho$ is preserved by all polymorphisms from $\pol(\Gamma)$. 
To reformulate corollary~\ref{BtoG} using this fact, we need to introduce some definitions.

Let $f$ be an $m$-ary operation on $D$. 
An $m$-ary operation $f^{\mathcal{A}^\sigma_D}$ on $\mathcal{A}^\sigma_D$ is defined by the following rule: $f^{\mathcal{A}^\sigma_{D}}\big(\mathbb{A}_1, \cdots,\mathbb{A}_m\big)=\mathbb{A}$ if and only if for any $j\in [k]$, 
\begin{equation}\label{outside}
o^{\mathbb{A}}_j(x_1,...,x_{n_j})=f\big(o^{\mathbb{A}_1}_j(x_1,...,x_{n_j}), o^{\mathbb{A}_2}_j(x_1,...,x_{n_j}),...,o^{\mathbb{A}_m}_j(x_1,...,x_{n_j})\big).
\end{equation}

We say that $f^{\mathcal{A}^\sigma_{D}}$ preserves $\mathcal{B}$ if $f^{\mathcal{A}^\sigma_{D}}\big(\mathbb{A}_1,...,\mathbb{A}_m\big)\in \mathcal{B}$ whenever $\mathbb{A}_1,...,\mathbb{A}_m \in \mathcal{B}$. 
The following lemma directly follows from the definition of the trace. For completeness, its proof is given in Appendix.
\begin{lemma} \label{bod}  An operation $p: D^m\to D$ is a polymorphism of $Tr(\mathcal{B})$ if and only if $p^{\mathcal{A}^\sigma_D}$ preserves $\mathcal{B}$.
\end{lemma}

From Lemma~\ref{bod} the following corollary is straightforward.
\begin{corollary} If $p^{\mathcal{A}^\sigma_D}$ preserves $\mathcal{B}$ for any $p\in \pol(\Gamma)$, then $\textsc{CSP}(\bGamma^{\mathcal{B}})$ is polynomial-time reducible to $\textsc{CSP}(\bGamma)$.
\end{corollary}

\begin{example}  Let $D=\{0,1\}$ and $\Gamma=\{\{0\}, \{1\}, xy\to z\}$. It is well-known that $\Gamma^\ast = {\rm Inv}(\{\wedge\})$ is a set of Horn predicates. Therefore, if $\wedge^{\mathcal{A}^\sigma_D}$ preserves $\mathcal{B}$, then $\textsc{CSP}(\bGamma^{\mathcal{B}})$ is polynomial-time solvable.
\end{example}
\else
\fi

Theorem~\ref{simple-red} can be slightly strengthened. In order to simplify our notation we will consider a case when the signature $\sigma$ contains only one $n$-ary operation symbol $o$. Recall that $\varrho(\bv) =\big\{ d(\mathbf{v}, \mathbf{y}) | \mathbf{y}\in \varrho\big\}$ (see the definition~\ref{pred-func}).
Thus, $Tr(\mathcal{B}) (\alpha_{n}(1), \cdots,  \alpha_{n}(d^{n}))$ equals
$$
\big\{ \big((\alpha_{n}(1),y_1), \cdots,  (\alpha_{n}(d^{n}),y_{d^{n}})\big) | (y_1, \cdots,  y_{d^{n}})\in Tr(\mathcal{B})\big\}.
$$
Let us denote $\bGamma^n = \big(D^n, \varrho^n_1, \cdots, \varrho^n_s\big)$ (the notation $\varrho^n_i$ is introduced after Definition~\ref{trace-def}).

\begin{theorem}\label{strong-red} $\textsc{CSP}(\bGamma^{\mathcal{B}})$ is polynomial-time Karp reducible to $$\textsc{MCSP}(\Gamma_{\bGamma^n} \cup \{Tr(\mathcal{B}) (\alpha_{n}(1), \cdots,  \alpha_{n}(d^{n}))\}).$$ 
\end{theorem}

\begin{proof}
Let us return to the proof of Theorem~\ref{simple-red} and to the CSP that we constructed in that proof. Since we have only one operation symbol in $\sigma$ we will omit the second index in our variables.
Recall that we had two types of constraints. 
Constraints of the first type require a tuple of variables $$\big({\mathsf a}[{v,\alpha_{n}(1)}], \cdots, {\mathsf a}[{v,\alpha_{n}(d^{n})}]\big)$$ to take its values in $Tr(\mathcal{B})$. Constraints of the second type are as follows: for any $l\in [s]$ and any $(v_1,...,v_p)\in r_l$, $(\overline{x}_1, ..., \overline{x}_{p})\in \varrho_l^{n}$, $\big({\mathsf a}[{v_1,\overline{x}_1}], ..., {\mathsf a}[{v_p,\overline{x}_{p}}]\big)$ should take its values in $\varrho_l$. Thus, in the homomorphism reformulation of CSP, we have to find a homomorphism between a new pair of structures $\bR' = (V',r'_1,\ldots,r'_s, \Xi)$ and $\bGamma\smallfrown\mathcal{B} = \big(D, \varrho_1, \cdots, \varrho_s, Tr(\mathcal{B}) \big)$ where 
\begin{equation*}
\begin{split}
V' = \{{\mathsf a}[v,\alpha_{n}(j)]| v\in V, j\in [d^{n}]\}, \\
r'_l = \{\big({\mathsf a}[{v_1,\overline{x}_1}], ..., {\mathsf a}[{v_p,\overline{x}_{p}}]\big)| (v_1,...,v_p)\in r_l, (\overline{x}_1, ..., \overline{x}_{p})\in \varrho_l^{n}\}, \\
\Xi = \{({\mathsf a}[{v,\alpha_{n}(1)}], \cdots, {\mathsf a}[{v,\alpha_{n}(d^{n})}])| v\in V\}.
\end{split}
\end{equation*}
Let us define $\bGamma_\xi^n = \big(D^n, \varrho^n_1, \cdots, \varrho^n_s, \xi\big)$ where $\xi = \{(\alpha_{n}(1), \cdots, \alpha_{n}(d^{n}))\}$.
By construction a mapping $\delta:V'\to D^n$, where $\delta({\mathsf a}[{v,\overline{x}}]) = \overline{x}$, is a homomorphism from $\bR'$ to $\bGamma_\xi^n$. Thus, we are given a homomorphism to $\bGamma^n_\xi$ and our goal is to find a homomorphism to $\bGamma\smallfrown\mathcal{B}$ which is exactly the definition of $\textsc{CSP}^+_{\bGamma_\xi^n}(\bGamma\smallfrown\mathcal{B})$.

According to Lemma~\ref{prototype}, $\textsc{CSP}^+_{\bGamma_\xi^n}(\bGamma\smallfrown\mathcal{B})$ is equivalent to $\textsc{MCSP}((\bGamma\smallfrown\mathcal{B})_{\bGamma_\xi^n})$. There are 2 types of relations in $(\bGamma\smallfrown\mathcal{B})_{\bGamma_\xi^n}$: those that are in $\Gamma_{\bGamma^n}$ and the relation $Tr(\mathcal{B})\big(\alpha_{n}(1), \cdots, \alpha_{n}(d^{n})\big)$. Therefore, we reduced $\textsc{CSP}(\bGamma^{\mathcal{B}})$ to $\textsc{MCSP}(\Gamma_{\bGamma^n}\cup \{Tr(\mathcal{B}) (\alpha_{n}(1), \cdots,  \alpha_{n}(d^{n}))\})$. \qed
\end{proof}


To formulate a version of Theorem~\ref{strong-red} for a general signature $\sigma$ we need the notion of the disjoint union of relational structures. Given similar structures ${\mathbf T}_i = (A_i, \varrho_{i1}, ..., \varrho_{ik}), i\in [q]$, their disjoint union, denoted $\uplus_{i=1}^q {\mathbf T}_i$, is a structure $(B, \pi_1,..., \pi_k)$ with the domain $B = \cup_{i=1}^q \{i\}\times A_i$ and relations $\pi_j = \cup_{i=1}^q \tau_{ij}$ where $\tau_{ij}$ is a reinterpretation of $\varrho_{ij}$ as a relation over $\{i\}\times A_i$, i.e. $\tau_{ij} = \{\big((i,a_1), ..., (i, a_{\ar(\varrho_{ij})})\big) \mid (a_1, ..., a_{\ar(\varrho_{ij})})\in \varrho_{ij}\}$. We denote $B$ as $\uplus_{i=1}^q A_i$ and $\pi_j$ as $\uplus_{i=1}^q \varrho_{ij}$.
Let us denote $
\bGamma^\sigma = \uplus_{i=1}^s \bGamma^{n_i}$.
Let also $\gamma_1, \cdots,  \gamma_N$ be the ordering of $\cup_{i=1}^k \{i\}\times D^{n_i}$ (the domain of $\bGamma^\sigma$) in which first elements from $\{1\}\times D^{n_1}$ go (in lexicographic order), second $\{2\}\times D^{n_2}$ (in lexicographic order), etc. Then, a generalization of Theorem~\ref{strong-red} is below. Its proof can be found in Appendix.
\begin{theorem}\label{strong-red11} $\textsc{CSP}(\bGamma^{\mathcal{B}})$ is polynomial-time  Karp reducible to $${\textsc MCSP}(\Gamma_{\bGamma^\sigma}\cup \{Tr(\mathcal{B}) (\gamma_1, \cdots,  \gamma_N)\}).$$ 
\end{theorem}

\begin{remark}
$\textsc{MCSP}(\Gamma_{\bGamma^\sigma})$ is tractable, because $\bGamma^\sigma\to\bGamma$ and $\textsc{CSP}^+_{\bGamma^\sigma}(\bGamma)$ is a trivial problem (lemma~\ref{prototype}). Therefore, $\textsc{CSP}(\Gamma_{\bGamma^\sigma})$ is tractable. Moreover, by construction $ \{Tr(\mathcal{B})(\gamma_1, \cdots,  \gamma_N)\}$ is also tractable.
Thus, Theorem~\ref{strong-red11} describes a reduction to an NP-hard language only if the union of those two tractable languages is NP-hard. We conducted some experimental studies with the latter constraint language using the Polyanna software which can be found in Section~\ref{polyanna-exp} of Appendix. 
\end{remark}

\vspace{-10pt}
\section{Conclusions}
\vspace{-5pt}
As examples \ref{tournament}-\ref{TWNU} show, the induced problem 
$\textsc{CSP} (\bGamma^{\mathcal{B}})$ is often easier that the initial $\textsc{CSP} (\bGamma)$. If $\mathcal{B}$ is tractable and one manages to find a homomorphism $\chi:\bR\to \bGamma^{\mathcal{B}}$, then finding $h:\bR\to \bGamma$ can be done efficiently. This inspires the whole family of algorithms based on reducing $\textsc{CSP} (\bGamma)$ to $\textsc{CSP} (\bGamma^{\mathcal{B}})$. This generalizes Green and Cohen's reduction of CSPs to finding appropriate permutations of domains. 

It is an open research problem to generalize the construction of the template $\bGamma^{\mathcal{B}}$ to valued constraint languages (one such example can be found in Section 5 of~\cite{takhanov:LIPIcs:2017:8247}). Practical application of the reduction of $\textsc{CSP} (\bGamma)$ to $\textsc{CSP} (\bGamma^{\mathcal{B}})$ is another topic of future research.
\vspace{-10pt}

%
%
%

%
\bibliographystyle{splncs04}

\begin{thebibliography}{10}
\providecommand{\url}[1]{\texttt{#1}}
\providecommand{\urlprefix}{URL }
\providecommand{\doi}[1]{https://doi.org/#1}

\bibitem{barto_et_al}
Barto, L., Krokhin, A., Willard, R.: {Polymorphisms, and How to Use Them}. In:
  Krokhin, A., Zivny, S. (eds.) The Constraint Satisfaction Problem: Complexity
  and Approximability, Dagstuhl Follow-Ups, vol.~7, pp. 1--44. Schloss
  Dagstuhl--Leibniz-Zentrum fuer Informatik, Dagstuhl, Germany (2017).
  \doi{10.4230/DFU.Vol7.15301.1},
  \url{http://drops.dagstuhl.de/opus/volltexte/2017/6959}

\bibitem{Bodnarchuk}
Bodnarchuk, V., Kalu{\noopsort{ZZ}\v{z}}nin, L., Kotov, V., Romov, B.: Galois
  theory for post algebras. Cybernetics  \textbf{5}(1-2),  243--252 (1969)

\bibitem{bulatov05:classifying}
Bulatov, A., Krokhin, A., Jeavons, A.: Classifying the {C}omplexity of
  {C}onstraints using {F}inite {A}lgebras. {SIAM} Journal on Computing
  \textbf{34}(3),  720--742 (2005)

\bibitem{doi:10.1137/050628957}
Bulatov, A., Dalmau, V.: A simple algorithm for mal'tsev constraints. SIAM
  Journal on Computing  \textbf{36}(1),  16--27 (2006). \doi{10.1137/050628957}

\bibitem{Bulatov17a}
Bulatov, A.A.: A dichotomy theorem for nonuniform csps. CoRR
  \textbf{abs/1703.03021} (2017), \url{http://arxiv.org/abs/1703.03021}

\bibitem{Cohen2004}
Cohen, D.A.: Tractable decision for a constraint language implies tractable
  search. Constraints  \textbf{9}(3),  219--229 (Jul 2004).
  \doi{10.1023/B:CONS.0000036045.82829.94},
  \url{https://doi.org/10.1023/B:CONS.0000036045.82829.94}

\bibitem{cohen08:generalising}
Cohen, D.A., Cooper, M.C., Jeavons, P.G.: Generalising submodularity and {H}orn
  clauses: {T}ractable optimization problems defined by tournament pair
  multimorphisms. Theoretical Computer Science  \textbf{401}(1-3),  36--51
  (2008). \doi{10.1016/j.tcs.2008.03.015}

\bibitem{Cook:1971}
Cook, S.A.: The complexity of theorem-proving procedures. In: Proceedings of
  the Third Annual ACM Symposium on Theory of Computing. pp. 151--158. STOC '71
  (1971)

\bibitem{feder98:monotone}
Feder, T., Vardi, M.Y.: The {C}omputational {S}tructure of {M}onotone {M}onadic
  {S{N}{P}} and {C}onstraint {S}atisfaction: {A} {S}tudy through {D}atalog and
  {G}roup {T}heory. {SIAM} Journal on Computing  \textbf{28}(1),  57--104
  (1998)

\bibitem{Garey:1990}
Garey, M.R., Johnson, D.S.: Computers and Intractability; A Guide to the Theory
  of NP-Completeness. W. H. Freeman \& Co., New York, NY, USA (1990)

\bibitem{Gault2004}
Gault, R., Jeavons, P.: Implementing a test for tractability. Constraints
  \textbf{9}(2),  139--160 (Apr 2004)

\bibitem{geiger1968}
Geiger, D.: Closed systems of functions and predicates. Pacific J. Math.
  \textbf{27}(1),  95--100 (1968)

\bibitem{GREEN20081094}
Green, M.J., Cohen, D.A.: Domain permutation reduction for constraint
  satisfaction problems. Artificial Intelligence  \textbf{172}(8),  1094 --
  1118 (2008)

\bibitem{Jeavons:1998}
Jeavons, P.: On the algebraic structure of combinatorial problems. Theor.
  Comput. Sci.  \textbf{200}(1-2),  185--204 (Jun 1998)

\bibitem{JEAVONS1998251}
Jeavons, P., Cohen, D., Cooper, M.C.: Constraints, consistency and closure.
  Artificial Intelligence  \textbf{101}(1),  251--265 (1998).
  \doi{https://doi.org/10.1016/S0004-3702(98)00022-8},
  \url{https://www.sciencedirect.com/science/article/pii/S0004370298000228}

\bibitem{jeavons98:algebraic}
Jeavons, P.G.: On the {A}lgebraic {S}tructure of {C}ombinatorial {P}roblems.
  Theoretical Computer Science  \textbf{200}(1-2),  185--204 (1998).
  \doi{10.1016/S0304-3975(97)00230-2}

\bibitem{4ary}
Kearnes, K., Markovi\'c, P., McKenzie, R.: {Optimal strong Mal'cev conditions
  for omitting type 1 in locally finite varieties.} Algebra Univers.
  \textbf{72}(1),  91--100 (2014)

\bibitem{Kearnes2014}
Kearnes, K., Markovi{\'{c}}, P., McKenzie, R.: Optimal strong mal'cev
  conditions for omitting type 1 in locally finite varieties. Algebra
  universalis  \textbf{72}(1),  91--100 (Aug 2014).
  \doi{10.1007/s00012-014-0289-9},
  \url{https://doi.org/10.1007/s00012-014-0289-9}

\bibitem{kolmogorov15:mrt}
Kolmogorov, V., Rolinek, M., Takhanov, R.: Effectiveness of structural
  restrictions for hybrid csps. In: Proceedings of 26th International
  Symposium, (ISAAC 2015). pp. 566--577 (2015)

\bibitem{MarotiMcKenzie}
Mar\'oti, M., McKenzie, R.: Existence theorems for weakly symmetric operations.
  Algebra universalis  \textbf{59}(3--4),  463--489 (Oct 2008)

\bibitem{Schaefer:1978}
Schaefer, T.J.: The complexity of satisfiability problems. In: Proceedings of
  the Tenth Annual ACM Symposium on Theory of Computing. pp. 216--226. STOC '78
  (1978)

\bibitem{Siggers}
Siggers, M.H.: A strong {M}al'cev condition for locally finite varieties
  omitting the unary type. Algebra universalis  \textbf{64}(1--2),  15--20 (Oct
  2010)

\bibitem{Takhanov15}
Takhanov, R.: Hybrid (v)csps and algebraic reductions. CoRR
  \textbf{abs/1506.06540v1} (2015), \url{https://arxiv.org/abs/1506.06540v1}

\bibitem{takhanov:LIPIcs:2017:8247}
Takhanov, R.: {Hybrid VCSPs with Crisp and Valued Conservative Templates}. In:
  Okamoto, Y., Tokuyama, T. (eds.) 28th International Symposium on Algorithms
  and Computation (ISAAC 2017). Leibniz International Proceedings in
  Informatics (LIPIcs), vol.~92, pp. 65:1--65:13. Schloss
  Dagstuhl--Leibniz-Zentrum fuer Informatik, Dagstuhl, Germany (2017).
  \doi{10.4230/LIPIcs.ISAAC.2017.65},
  \url{http://drops.dagstuhl.de/opus/volltexte/2017/8247}

\bibitem{Takhanov10adichotomy}
Takhanov, R.S.: A dichotomy theorem for the general minimum cost homomorphism
  problem. In: In Proceedings of the 27th International Symposium on
  Theoretical Aspects of Computer Science (STACS). pp. 657--668 (2010)

\bibitem{Zhuk17}
Zhuk, D.: The proof of {CSP} dichotomy conjecture. CoRR
  \textbf{abs/1704.01914} (2017), \url{http://arxiv.org/abs/1704.01914}

\end{thebibliography}
\newcommand{\noopsort}[1]{}

\newpage
\appendix

\section{Proofs for Section~\ref{sec:relation}}
\begin{proof}[Proof of Lemma~\ref{bod}]
Let us prove that lemma for a simple case when the signature $\sigma$ contains only one $n$-ary operator $o$. Thus,
$$
Tr(\mathcal{B}) = \{\big(o^{{\mathbb A}}(\alpha_{n}(1)), \cdots, o^{{\mathbb A}}(\alpha_{n}(d^{n}))\big) | {\mathbb A}\in\mathcal{B}\}
$$ 
Then, by construction, $p: D^m\to D$ is a polymorphism of $Tr(\mathcal{B})$ if and only if 
for any $\mathbb{A}_1, \cdots,\mathbb{A}_m\in\mathcal{B}$, application of $p$ to the rows of the matrix 
$$
\begin{bmatrix}
o^{{\mathbb A}_1}(\alpha_{n}(1)) & \cdots & o^{{\mathbb A}_m}(\alpha_{n}(1)) \\
\cdots & \cdots & \cdots \\
o^{{\mathbb A}_1}(\alpha_{n}(d^{n}))& \cdots & o^{{\mathbb A}_m}(\alpha_{n}(d^{n})) \\
\end{bmatrix}
$$
gives another element of $Tr(\mathcal{B})$, i.e.
$$
\begin{bmatrix}
o^{{\mathbb A}}(\alpha_{n}(1))  \\
\cdots   \\
o^{{\mathbb A}}(\alpha_{n}(d^{n})) \\
\end{bmatrix}
$$
for an appropriate $\mathbb{A}\in\mathcal{B}$. By construction $\mathbb{A} = p^{\mathcal{A}^\sigma_{D}}\big(\mathbb{A}_1, \cdots,\mathbb{A}_m\big)$, and therefore, $p^{\mathcal{A}^\sigma_{D}}\big(\mathbb{A}_1, \cdots,\mathbb{A}_m\big)\in \mathcal{B}$ whenever $\mathbb{A}_1, \cdots,\mathbb{A}_m\in\mathcal{B}$. All steps in our proof are invertible. \qed
\end{proof}

\begin{proof} [Sketch of proof of Theorem~\ref{strong-red11}]
As in the proof of Theorem~\ref{strong-red} we build $\bR' = (V',r'_1,\ldots,r'_s, \Xi)$ where 
\begin{equation*}
\begin{split}
V' = \{{\mathsf a}[v,i,\alpha_{n_i}(j)]| v\in V, i\in [k], j\in [d^{n_i}]\},\\
r'_l = \{\big({\mathsf a}[{v_1,j,\overline{x}_1}], ..., {\mathsf a}[{v_p,j,\overline{x}_{p}}]\big)|j\in [k],  (v_1,...,v_p)\in r_l, (\overline{x}_1, ..., \overline{x}_{p})\in \varrho_l^{n_j}\}, \\
\Xi = \{({\mathsf a}[{v,1,\alpha_{n_1}(1)}], \cdots, {\mathsf a}[{v,1,\alpha_{n_1}(d^{n_1})}], 
\cdots, {\mathsf a}[{v,k,\alpha_{n_k}(1)}], \\ 
\cdots, {\mathsf a}[{v,k,\alpha_{n_k}(d^{n_k})}])| v\in V\}
\end{split}
\end{equation*}
and define 
$$\bGamma_\xi^n = \big(\uplus_{i=1}^k D^{n_i}, \uplus_{i=1}^k\varrho^{n_i}_1, \cdots, \uplus_{i=1}^k \varrho^{n_i}_s, \xi\big)$$ where $\xi = \{(\gamma_1, \cdots,  \gamma_N)\}$.
By construction a mapping $\delta:V'\to \uplus_{i=1}^k D^{n_i}$, where $\delta({\mathsf a}[{v,j,\overline{x}}]) = (j,\overline{x})$, is a homomorphism from $\bR'$ to $\bGamma_\xi^n$. Thus, we are given a homomorphism to $\bGamma^n_\xi$ and our goal is to find a homomorphism to $\bGamma\smallfrown\mathcal{B}$ which is an instance of $\textsc{CSP}^+_{\bGamma_\xi^n}(\bGamma\smallfrown\mathcal{B})$.

Since $\textsc{CSP}^+_{\bGamma_\xi^n}(\bGamma\smallfrown\mathcal{B})$ and $\textsc{MCSP}(\Gamma_{\bGamma^\sigma}\cup \{Tr(\mathcal{B}) (\gamma_1, \cdots,  \gamma_N)\})$ are equivalent (by Lemma \ref{prototype}), we reduced $\textsc{CSP}(\bGamma^{\mathcal{B}})$ to $\textsc{MCSP}(\Gamma_{\bGamma^\sigma}\cup \{Tr(\mathcal{B}) (\gamma_1, \cdots,  \gamma_N)\})$. \qed
\end{proof}

\subsection{Experiments with Polyanna}\label{polyanna-exp}

\begin{remark}\label{cohen-horn2}
Theorems~\ref{strong-red} and~\ref{strong-red11} do not use all properties of the CSP to which we reduce the initial problem $\textsc{CSP}(\bGamma^{\mathcal{B}})$. Note that in that reduction all variables are partitioned, and any block of the partition ${\mathsf a}[{v,1,\alpha_{n_1}(1)}]$, $\cdots$, ${\mathsf a}[{v,k,\alpha_{n_k}(d^{n_k})}]$ corresponds to a variable $v$ of the initial CSP. Constraints with the relation $Tr(\mathcal{B})(\gamma_1, \cdots,  \gamma_N)$ can be applied only to variables inside of any partition, i.e. such constraints are ``local''. At the same time, constraints with relations from $\Gamma_{\bGamma^\sigma}$ can be applied ``globally''. This property is not used in Theorems~\ref{strong-red} and~\ref{strong-red11}. The latter leads to a non-equivalence of $\textsc{CSP}(\Gamma_{\bGamma^\sigma}\cup \{Tr(\mathcal{B})(\gamma_1, \cdots,  \gamma_N)\})$ and $\textsc{CSP}(\bGamma^{\mathcal{B}})$ in general (though they can be equivalent in special cases).

To verify the last statement, we used Polyanna~\cite{Gault2004}, a specialized software dedicated to the study of polymorphisms for a given language. We experimented with a Boolean domain $D$ and the set ${\mathcal B} = \{(D,\vee), (D, \wedge)\}$. According to Example~\ref{ExCohen}, $\textsc{CSP}(\bGamma^{\mathcal B})$ is tractable. It also follows from the fact that the domain permutation reduction to a max-closed constraint language is always tractable for Boolean domains~\cite{GREEN20081094}.
In our experiments with Polyanna, when the chosen language $\Gamma$ was NP-hard, $\Gamma_{\bGamma^n}\cup \{Tr(\mathcal{B})(\alpha_{n}(1), \cdots, \alpha_{n}(d^{n}))\}$ was also NP-hard (i.e. harder than $\textsc{CSP}(\bGamma^{\mathcal B})$). 
\end{remark}

\begin{remark}
We also conducted experimental studies in the Boolean case when $\bGamma$ is fixed, and ${\mathcal B}$ is varied. Note that Polyanna could not answer an overwhelming majority of cases because of the computational hardness of finding WNU polymorphism for languages with large domains. That is why we do not infer conjectures from the following findings. 

For $\bGamma = (D; \varrho_{\rm 1-in-3})$, where $\varrho_{\rm 1-in-3}=\{(0,0,1), (0,1,0), (1,0,0)\}$, and ${\mathcal B} \subseteq \{{\mathbb A} = (D,b)| b: D^2\rightarrow D\}$, Polyanna identifies $\Gamma_{\bGamma^n}\cup \{Tr(\mathcal{B})(\alpha_{n}(1), \cdots,  \alpha_{n}(d^{n}))\}$ as tractable if and only if either $(D, b(x,y)=x)\in {\mathcal B} $ or $(D, b(x,y)=y)\in {\mathcal B}$.

For $\bGamma = (D; \varrho_{\rm NAE})$, where $$\varrho_{\rm NAE}=\{(0,0,1), (0,1,0), (1,0,0), (1,1,0), (1,0,1), (0,1,1)\},$$ and ${\mathcal B} \subseteq \{{\mathbb A} = (D,b)| b: D^2\rightarrow D\}$, Polyanna identifies $\Gamma_{\bGamma^n}\cup \{Tr(\mathcal{B}) (\alpha_{n}(1), ..., \alpha_{n}(d^{n}))\}$ as tractable if  and only if $(D, b(x,y)=x)\in {\mathcal B} $ or $(D, b(x,y)=y)\in {\mathcal B} $ or $(D, b(x,y)=\neg{x})\in {\mathcal B} $ or $(D, b(x,y)=\neg{y})\in {\mathcal B} $. Here $\neg{x}$ is the negation of $x$.

In all our experiments with NP-hard $\Gamma$, the language $\Gamma_{\bGamma^n}\cup \{Tr(\mathcal{B})(\alpha_{n}(1), ..., \alpha_{n}(d^{n}))\}$ was identified by Polyanna as tractable only when $\bGamma^{\mathcal B}$ was constant preserving.
\end{remark}

\section{Examples}\label{nice-examples}
Let us now list a number of specific examples that demonstrate that $\textsc{CSP}(\bGamma^{\mathcal{B}})$ can be easier than $\textsc{CSP}(\bGamma)$ and could motivate a number of non-trivial algorithms for $\textsc{CSP}(\bGamma)$. 

Below, in subsections \ref{tournament}-\ref{TWNU}, we give examples of $\mathcal{B}$ for which $\textsc{CSP}(\bGamma^{\mathcal{B}})$ is tractable for any (possibly NP-hard) $\bGamma$. In all examples below, $\varrho$ is an arbitrary relation over $D$, and $\bGamma$ is an arbitrary relational structure over $D$.

\subsection{Tournament pairs}\label{tournament} All the argument of the previous example remain unchanged if we set
$$
\mathcal{B}_1 = \big\{(D, a, b)| \{a(x,y),b(x,y)\}= \{x,y\}, a(x,y)=a(y,x), b(x,y)=b(y,x)\big\}.
$$
These are algebras of so called tournament pairs. They appear in the context of Minimum Cost Homomorphism problems~\cite{Takhanov10adichotomy,cohen08:generalising}. They can also be understood using the following formulation. For any $(D, a, b)\in \mathcal{B}_1$, if we choose any two distinct elements $e_0, e_1\in D$ and identify $e_0$ with 0, and $e_1$ with 1, then a pair of operations $a,b$ restricted to $\{0,1\}$ should be either $(\vee, \wedge)$ or $(\wedge, \vee)$. In other words, $a|_{\{0,1\}}$ is either $\vee$ or $\wedge$, and $b|_{\{0,1\}}$ is a Boolean function dual to $a|_{\{0,1\}}$.
Thus, $|\mathcal{B}_1|=2^{|D| \choose 2}$.

We now define a majority polymorphism $m_1$ by the following rule:
\begin{align*}
m_1((D, a_1, b_1),(D, a_2, b_2),(D, a_3, b_3)) = (D, a, b) \Leftrightarrow \\
a(x,y)=a_1(a_2(x,y),a_3(x,y)), b(x,y)=b_1(b_2(x,y),b_3(x,y)).
\end{align*}
Let us check that if $(D, a_1, b_1),(D, a_2, b_2),(D, a_3, b_3)\in \mathcal{B}_1$ then the resulting algebra $(D, a, b)$ is also in $\mathcal{B}_1$. For $D=\{0,1\}$, the latter can be seen from a well-known fact about Boolean functions: $x\vee y, x\wedge y$ are dual, and, if in any monotone logical expression $T$ that can be build from $\vee , \wedge $ we interchange $\vee$ and $\wedge $, then we obtain the dual of $T$. Thus, $a(x,y)$ and $b(x,y)$ are dual to each other (as $a_i$ and $b_i$), conservative and commutative. Therefore, $(D, a, b)\in \mathcal{B}_1$. For general $D$, we only have to refer to the previous case: for any two distinct $e_0, e_1\in D$ if we identify $e_0$ with 0, and $e_1$ with 1, then $b|_{\{0,1\}}$ becomes dual to $a|_{\{0,1\}}$.

Identically to Example~\ref{ExCohen}, it can be shown that $m_1$ is a majority polymorphism of $\varrho^{\mathcal{B}}$ for any $\varrho\subseteq D^k$. Therefore, $\textsc{CSP}(\bGamma^{\mathcal{B}})$ is tractable  for any $\bGamma$.

\subsection{Commutative idempotent binary operations} A direct extension of the set of algebras of example~\ref{ExCohen} is the following set of algebras
$$
\mathcal{B}_{\rm com} = \big\{(D, a)|a:D^2\to D, a(x,y)=a(y,x), a(x,x)=x\big\}.
$$
Let us introduce an idempotent binary operation $\Diamond:\mathcal{B}_{\rm com}^2\to \mathcal{B}_{\rm com} $ by:
$$
\Diamond((D, b_1),(D, b_2)) = (D, b) \Leftrightarrow b(x,y)=b_1(b_1(x,y),b_2(x,y)).
$$
Also, let us define a unary operation $\house: \mathcal{B}_{\rm com}\to \mathcal{B}_{\rm com}$:
$$
\house ((D, b_1)) = (D, b) \Leftrightarrow b(x,y)=b_1(b_1(b_1(x,y),x),b_1(b_1(x,y),y)).
$$

\begin{figure}
  \centering
  \begin{center}
    \includegraphics[width=0.35\textwidth,trim={5cm 5cm 6cm 3cm},clip]{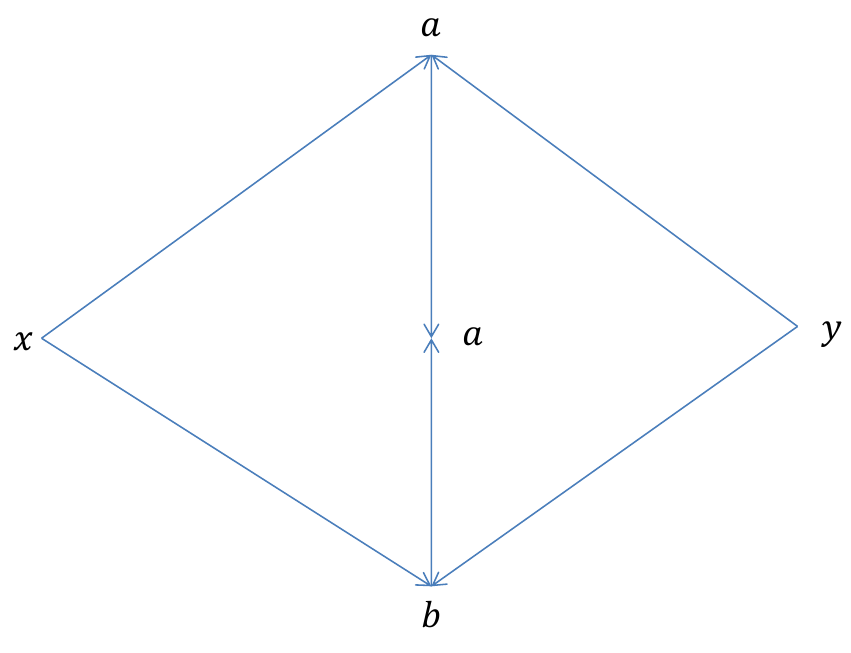} \includegraphics[width=0.2\textwidth,trim={6cm 2cm 8cm 1cm},clip]{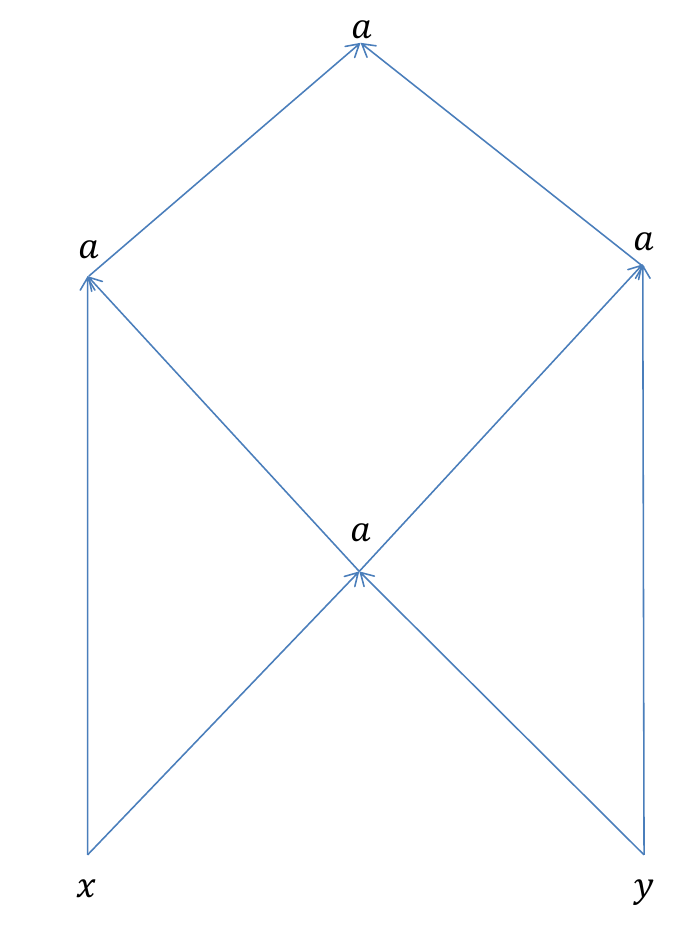}
  \end{center}
  \caption{Circuits representing $\Diamond((D, a),(D, b))$ and $\house ((D, a))$.}
\end{figure}

It is well-known that for any unary $\phi\in \pol(\bGamma^{\mathcal{B}_{\rm com}})$ such that $\phi\circ\phi=\phi$, $\textsc{CSP}(\bGamma^{\mathcal{B}_{\rm com}})$ can be reduced to $\textsc{CSP}\big( \{\varrho\cap \phi(\mathcal{B}_{\rm com})^{\ar(\varrho)}| \varrho\in\Gamma^{\mathcal{B}_{\rm com}}\}\big)$. Thus, the tractability of $\Gamma^{\mathcal{B}_{\rm com}}$ is equivalent to the tractability of $\{\varrho\cap \phi(\mathcal{B}_{\rm com})^{\ar(\varrho)}| \varrho\in\Gamma^{\mathcal{B}_{\rm com}}\}$, and therefore, is defined by polymorphisms of $\{\varrho\cap \phi(\mathcal{B}_{\rm com})^{\ar(\varrho)}| \varrho\in\Gamma^{\mathcal{B}_{\rm com}}\}$ on the sub-domain $\phi(\mathcal{B}_{\rm com})$.

Since $\house\in \pol(\bGamma^{\mathcal{B}_{\rm com}})$ for any $\Gamma$, then for any $n\in {\mathbb N}$, $ \house^n(x) = \house(\house(\cdots \house(x)\cdots))$ is a polymorphism of $\bGamma^{\mathcal{B}_{\rm com}}$. 
\begin{theorem}
Let $|D|\leq 5$ and let $n\in {\mathbb N}$ be the smallest number such that $\house^n\circ \house^n=\house^n$. Then $B:(\house^n(\mathcal{B}_{\rm com}))^2\to \house^n(\mathcal{B}_{\rm com})$, defined by $B(x,y) = \house^n(\Diamond(x,y))$, is idempotent and commutative.
\end{theorem}

A computer proof of that theorem is based on checking $B(x,y)=B(y,x)$ for any pair $x,y\in \house^n(\mathcal{B}_{\rm com})$ using a Python 3 script (see \url{https://github.com/k-nic/constraints}). 

Since $\Diamond, \house \in \pol(\Gamma^{\mathcal{B}_{\rm com}})$, then $B'(x,y) = \house^n(\Diamond(\house^n(x),\house^n(y)))$ is a polymorphism of $\Gamma^{\mathcal{B}_{\rm com}}$ and $B(x,y)=B'(x,y), (x,y)\in \house^n(\mathcal{B}_{\rm com})^2$ is a polymorphism of $\{\varrho\cap \house^n(\mathcal{B}_{\rm com})^{\ar(\varrho)}| \varrho\in\Gamma^{\mathcal{B}_{\rm com}}\}$. Since $B$ is idempotent and binary, we obtain that $\{\varrho\cap \house^n(\mathcal{B}_{\rm com})^{\ar(\varrho)}| \varrho\in\Gamma^{\mathcal{B}_{\rm com}}\}$ is tractable. Thus,
\begin{theorem}
If $|D|\leq 5$, then $\bGamma^{\mathcal{B}_{\rm com}}$ is tractable for any $\Gamma$.
\end{theorem}

We conjecture that $\bGamma^{\mathcal{B}_{\rm com}}$ is tractable for any finite $D$, though for $|D|\geq 6$, the cardinality $|\mathcal{B}_{\rm com}|=|D|^{{|D| \choose 2}}\geq 470184984576$ is so large that $\textsc{CSP}(\bGamma^{\mathcal{B}_{\rm com}})$ loses  any practical importance.

\ifCOM

Let $D=\{0,1,2\} = {\mathbb Z}/3{\mathbb Z}$, $\mathcal{B} = \big\{(D, w)| w(x,y)=w(y,x), w(x,x)=x\big\}$ and 
$$\mathcal{B}_i = \big\{(D, w)| w(x,y)=w(y,x), w(x,x)=x,w(i,x)=i\big\} \subseteq \mathcal{B}
$$
for $i\in D$. Let us set $\mathcal{B}_{012} = \mathcal{B}_0\cup \mathcal{B}_1\cup \mathcal{B}_2$. 
Any binary commutative operation $b$ on $D$ is defined by its values $b(0,1), b(0,2), b(1,2)$, therefore $|\mathcal{B}|=3^3=27$. For any $\pi:D\to D$, $i\circ^{\pi} j = \pi(i+j)$ is a binary commutative operation ($+$ is a summation modulo 3). Since $|D^D| = 27$, then $\pi\to i\circ^{\pi} j$ is a one-to-one mapping between $D^D$ and $\mathcal{B}$.

As in example~\ref{ExCohen}, let us introduce an idempotent ternary operation $m:\mathcal{B} ^3\to \mathcal{B} $ by:
$$
m((D, b_1),(D, b_2),(D, b_3)) = (D, b) \Leftrightarrow b(x,y)=b_1(b_2(x,y),b_3(x,y)).
$$
Note that $m(y,x,x)=x$ and the image $m(\mathcal{B}_{012},\mathcal{B}_{012},\mathcal{B}_{012})$ is not a subset of $\mathcal{B}_{012}$. Let us now consider an idempotent operation $M(x,y,z) = m(x,y,m(y,x,z))$ and prove that $M(y,x,x)=M(x,y,x)=M(x,x,y)$ for any $x,y\in \mathcal{B}_{012}$. 

W.l.o.g. we can assume $x = (D,a)\in \mathcal{B}_0$, i.e. $a(0,u)=0$. If we denote $y=(D,b)$, the condition $\mathop\forall\limits_{x,y\in \mathcal{B}_{012}}M(y,x,x)=M(x,y,x)=M(x,x,y)$ is equivalent to
\begin{equation*}
\begin{split}
b(a(u,v),a(b(u,v), a(u,v))) = a(b(u,v),b(a(u,v), a(u,v))) = a(b(u,v),a(u,v)) =\\a(a(u,v),a(a(u,v), b(u,v))).
\end{split}
\end{equation*}
If $0\in \{u,v\}$, then $a(u,v)=0$, and the latter expressions are all equal to 0. It remains to check the case $u=1, v=2$. If $b(1,2)=0$, then $b\notin \mathcal{B}_{1}\cup \mathcal{B}_{2}$, i.e. $b\in \mathcal{B}_{0}$. Therefore, expressions are again equal to 0. If $b(1,2)=1$, then the latter identity reads as
\begin{equation*}
\begin{split}
b(a(1,2),a(1, a(1,2))) = a(1,a(1,2)) =a(a(1,2),a(a(1,2), 1)).
\end{split}
\end{equation*}
If $a(1,2)=i$, then $a(1,i)=i$. Therefore, $b(i,a(1,i)) = a(1,i) =a(i,a(i, 1))=i$ and the latter identities hold. The case $b(1,2)=2$ is treated analogously. Thus, we proved $\mathop\forall\limits_{x,y\in \mathcal{B}_{012}}M(y,x,x)=M(x,y,x)=M(x,x,y)$.

A direct computation gives:
\begin{equation*}
\begin{split}
u(\mathcal{B}) = \mathcal{B}_{012}\cup \{(D,b)| b(u,v)=\pi(u+v), \pi\in S(3)\} \\
u(u(\mathcal{B})) = u(\mathcal{B}) \\
u(M(\mathcal{B}_{012},\mathcal{B}_{012},\mathcal{B}_{012}))\subseteq \mathcal{B}_{012}
\end{split}
\end{equation*}
where $S(3) = \{\pi:D\to D| \pi{\rm \,\,is\,\,a\,\,permutation}\}$. Since the restriction $u|_{u(\mathcal{B})}: u(\mathcal{B})\to u(\mathcal{B})$ is a permutation, then there is $n\in {\mathbb N}$ such that $u^n(x)=x$. Thus, $N(x,y,z) = u^n(M(x,y,z))$ satisfies:
\begin{equation*}
\begin{split}
N(\mathcal{B}_{012},\mathcal{B}_{012},\mathcal{B}_{012})\subseteq \mathcal{B}_{012} (???) \\
\mathop\forall\limits_{x,y\in \mathcal{B}_{012}} N(y,x,x)=N(x,y,x)=N(x,x,y) \\
\mathop\forall\limits_{x\in \mathcal{B}_{012}}N(x,x,x)=x.
\end{split}
\end{equation*}
In other words $N\in {\rm Pol}(\bGamma^{\mathcal{B}_{012}})$ for any constraint language $\Gamma$ and $N$ is a weak majority operation.
\else
\fi
\subsection{Near unanimity operations}
Let $n\in {\mathbb N}, n\geq 3$ be fixed and
\begin{equation*}
\begin{split}
\mathcal{B}_{\rm nu} = \big\{(D, w)| w:D^n\to D, w(x,\cdots,x,y)=w(x,\cdots,x,y,x)=\\
\cdots = w(y,x,\cdots,x) = x\big\}
\end{split}
\end{equation*}
be a set of near unanimity operations of arity $n$.

An $n+1$-ary operation $T: \mathcal{B}^{n+1}_{\rm nu}\rightarrow \mathcal{B}_{\rm nu}$ defined by
\begin{equation*}
\begin{split}
T((D, w_1), (D, w_2), \cdots, (D, w_{n+1})) = (D, w) \Leftrightarrow \\
w(x_1,x_2,\cdots, x_n) = \\
w_{n+1}(w_1(x_1,x_2,\cdots, x_n),w_2(x_1,x_2,\cdots, x_n),\cdots, w_n(x_1,x_2,\cdots, x_n)) 
\end{split}
\end{equation*}
is itself a near unanimity. 

Indeed, if $w_1=w_2=\cdots=w_n=w\in \mathcal{B}_{\rm nu}$ and $w_{n+1}=v\in \mathcal{B}_{\rm nu}$, then
$$
v(w(\cdots),w(\cdots),\cdots, w(\cdots)) = w(\cdots)
$$
because $v$ is idempotent. If $w_2=\cdots=w_n=w_{n+1} = w$ and $w_{1}=v$, then 
$$
w(v(\cdots),w(\cdots),\cdots, w(\cdots)) = w(\cdots)
$$
because $w$ is a near unanimity. Analogously, the case $w_i=v$ and $w_j=w, j\ne i$ is treated.

It is easy see that $T$ is a polymorphism of $\bGamma^{\mathcal{B}_{\rm nu}}$ for any $\Gamma$. Thus, from the tractability of algebras with a near unanimity term~\cite{JEAVONS1998251} we have
\begin{theorem}
$\bGamma^{\mathcal{B}_{\rm nu}}$ is of bounded width for any $\Gamma$.
\end{theorem}

\subsection{Maltsev operations}
Let
$$
\mathcal{B}_{\rm M} = \big\{(D, w)|w:D^3\to D, w(x,x,y)=w(y,x,x)=y\big\}
$$
be a set of Maltsev operations on $D$.
Let us define a 4-ary operation $S: \mathcal{B}^4_{\rm M}\rightarrow \mathcal{B}_{\rm M}$ by the following rule:
\begin{align*}
S((D, w_1), (D, w_2), (D, w_3), (D, w_4)) = (D, w) \Leftrightarrow \\
w(x,y,z) = w_1(w_2(x,y,z),w_3(x,y,z),w_4(x,y,z)).
\end{align*}
By construction, $S$ is idempotent and $M(x,y,z) = S(x,x,y,z)$ satifies
$$
M(x,y,y)=S(x,x,y,y) = x
$$
and
$$
M(y,y,x)=S(y,y,y,x) = x,
$$
i.e. $M$ is itself a Maltsev operation.
Again, by construction, both $S$ and $M$ are polymorphisms of $\bGamma^{\mathcal{B}_{\rm M}}$ for any $\Gamma$. Therefore, from the tractability of Maltsev constraints~\cite{doi:10.1137/050628957} we have
\begin{theorem}
$\bGamma^{\mathcal{B}_{\rm M}}$ is tractable for any $\Gamma$.
\end{theorem}

\subsection{Ternary weak near unanimity operations} \label{TWNU}
Let
$$
\mathcal{B}_{\rm wnu} = \big\{(D, w)| w(x,x,y)=w(x,y,x)=w(y,x,x), w(x,x,x)=x\big\}
$$
for a general $D$. These are algebras of ternary idempotent weak near-unanimity operations.

Let us define a 4-ary operation $t: \mathcal{B}^4_{\rm wnu}\rightarrow \mathcal{B}_{\rm wnu}$ by the following rule:
\begin{align*}
t((D, w_1), (D, w_2), (D, w_3), (D, w_4)) = (D, w) \Leftrightarrow \\
w(x,y,z) = w_1(w_2(x,y,z),w_3(x,y,z),w_4(x,y,z)).
\end{align*}
The fact that the resulting operation $w$ is an idempotent weak near-unanimity operation is trivial. 

\begin{theorem} For $|D|=2{\rm \,\,or\,\,}3$, an operation $r:\mathcal{B}^4_{\rm wnu}\rightarrow \mathcal{B}_{\rm wnu}$, defined by
$$
r({\mathbb A}_1,{\mathbb A}_2,{\mathbb A}_3, {\mathbb A}_4) = \phi(t({\mathbb A}_1,{\mathbb A}_2,{\mathbb A}_3, {\mathbb A}_4), t({\mathbb A}_4,{\mathbb A}_3,{\mathbb A}_2, {\mathbb A}_1))
$$
where $\phi({\mathbb A},{\mathbb B}) = t({\mathbb A},{\mathbb B},{\mathbb A},{\mathbb A})$, is a weak near-unanimity operation.
\end{theorem}
\begin{proof} Our proof of the statement for $|D|=3$ is based on a brute-force checking  $r({\mathbb A},{\mathbb B},{\mathbb B}, {\mathbb B}) = r({\mathbb B},{\mathbb A}, {\mathbb B}, {\mathbb B})=r({\mathbb B},{\mathbb B}, {\mathbb A}, {\mathbb B})=r({\mathbb B},{\mathbb B}, {\mathbb B}, {\mathbb A})$  over all pairs $({\mathbb A}, {\mathbb B})\in \mathcal{B}_{\rm wnu}^2$, using a computer program written on Python 3 (see \url{https://github.com/k-nic/constraints}). The cardinality $|\mathcal{B}_{\rm wnu}| = |D|^{|D|(|D|-1)(|D|-2)+|D|(|D|-1)}$ grows too rapidly, therefore such a simple approach for $|D|=4$ is not possible.

For $D=\{0,1\}$, a small size proof can be done. Let us demonstrate it.
By construction $\mathcal{B}_{\rm wnu}$ consists of 4 algebras with the following operations: $x\wedge y \wedge z$, $x\vee y \vee z$, $\textsc{Mj}(x,y,z)$, $\textsc{Mn}(x,y,z)$ where $\textsc{Mj}(x,y,z) = [x+y+z\geq 2]$ and $\textsc{Mn}(x,y,z) = [x+y+z\equiv 1(\rm{mod }\,\,2)]$ ($[\cdot]$ is the Iverson bracket). Let us denote $\mathbf{1} = (D, x\wedge y \wedge z)$, $\mathbf{2} = (D, x\vee y \vee z)$, $\mathbf{3} = (D, \textsc{Mj})$ and $\mathbf{4} = (D, \textsc{Mn})$. Thus,
$$
\mathcal{B}_{\rm wnu} = \big\{\mathbf{1}, \mathbf{2}, \mathbf{3}, \mathbf{4}\big\}.
$$

The operation $t:\mathcal{B}^4_{\rm wnu}\rightarrow \mathcal{B}_{\rm wnu}$ itself is idempotent but is not a weak near-unanimity operation. 
In the function's table given below we list input tuples $(\mathbb{A}_1, \mathbb{A}_2, \mathbb{A}_3, \mathbb{A}_4)$ in which some 3 arguments out of 4 are the same. We omitted all tuples that can be obtained from the listed tuples by permuting ${\mathbb A}_2,{\mathbb A}_3, {\mathbb A}_4$ because the output of $t$ is preserved under such permutations. Also, we omitted tuples $({\mathbb A},{\mathbb A},{\mathbb A},{\mathbb A})$.
\begin{center}
    \begin{tabular}{ | l | l | l | l | l | l | l | l | l | l | l | l | l | l | l | l | l | l | l | l | l | l | l |}
    \hline
    $\mathbb{A}_1$ & $\mathbb{A}_2$ & $\mathbb{A}_3$ & $\mathbb{A}_4$ & $t$ & | & $\mathbb{A}_1$ & $\mathbb{A}_2$ & $\mathbb{A}_3$ & $\mathbb{A}_4$ & $t$ \\
    \hline
    $\mathbf{1}$ & $\mathbf{2}$ & $\mathbf{2}$ & $\mathbf{2}$ & $\mathbf{2}$ & & $\mathbf{2}$ & $\mathbf{1}$ & $\mathbf{1}$ & $\mathbf{1}$ & $\mathbf{1}$\\
    \hline
    $\mathbf{2}$ & $\mathbf{1}$ & $\mathbf{2}$ & $\mathbf{2}$ & $\mathbf{2}$ & & $\mathbf{1}$ & $\mathbf{2}$ & $\mathbf{1}$ & $\mathbf{1}$ & $\mathbf{1}$\\
    \hline
    $\mathbf{1}$ & $\mathbf{3}$ & $\mathbf{3}$ & $\mathbf{3}$ & $\mathbf{3}$ & & $\mathbf{3}$ & $\mathbf{1}$ & $\mathbf{1}$ & $\mathbf{1}$ & $\mathbf{1}$\\
    \hline
    $\mathbf{3}$ & $\mathbf{1}$ & $\mathbf{3}$ & $\mathbf{3}$ & $\mathbf{3}$ & & $\mathbf{1}$ & $\mathbf{3}$ & $\mathbf{1}$ & $\mathbf{1}$ & $\mathbf{1}$\\
    \hline
    $\mathbf{1}$ & $\mathbf{4}$ & $\mathbf{4}$ & $\mathbf{4}$ & \cellcolor{green!25}$\mathbf{4}$ & & $\mathbf{4}$ & $\mathbf{1}$ & $\mathbf{1}$ & $\mathbf{1}$ & $\mathbf{1}$\\
    \hline
    $\mathbf{4}$ & $\mathbf{1}$ & $\mathbf{4}$ & $\mathbf{4}$ & \cellcolor{green!25}$\mathbf{1}$ & & $\mathbf{1}$ & $\mathbf{4}$ & $\mathbf{1}$ & $\mathbf{1}$ & $\mathbf{1}$\\
    \hline
$\mathbf{2}$ & $\mathbf{3}$ & $\mathbf{3}$ & $\mathbf{3}$ & $\mathbf{3}$ & & $\mathbf{3}$ & $\mathbf{2}$ & $\mathbf{2}$ & $\mathbf{2}$ & $\mathbf{2}$\\
    \hline
$\mathbf{3}$ & $\mathbf{2}$ & $\mathbf{3}$ & $\mathbf{3}$ & $\mathbf{3}$ & & $\mathbf{2}$ & $\mathbf{3}$ & $\mathbf{2}$ & $\mathbf{2}$ & $\mathbf{2}$\\
    \hline
$\mathbf{2}$ & $\mathbf{4}$ & $\mathbf{4}$ & $\mathbf{4}$ & \cellcolor{yellow!25}$\mathbf{4}$ & & $\mathbf{4}$ & $\mathbf{2}$ & $\mathbf{2}$ & $\mathbf{2}$ & $\mathbf{2}$\\
    \hline
$\mathbf{4}$ & $\mathbf{2}$ & $\mathbf{4}$ & $\mathbf{4}$ & \cellcolor{yellow!25}$\mathbf{2}$ & & $\mathbf{2}$ & $\mathbf{4}$ & $\mathbf{2}$ & $\mathbf{2}$ & $\mathbf{2}$\\
    \hline
$\mathbf{3}$ & $\mathbf{4}$ & $\mathbf{4}$ & $\mathbf{4}$ & \cellcolor{blue!25}$\mathbf{4}$ & & $\mathbf{4}$ & $\mathbf{3}$ & $\mathbf{3}$ & $\mathbf{3}$ & $\mathbf{3}$\\
    \hline
$\mathbf{4}$ & $\mathbf{3}$ & $\mathbf{4}$ & $\mathbf{4}$ & \cellcolor{blue!25}$\mathbf{3}$ & & $\mathbf{3}$ & $\mathbf{4}$ & $\mathbf{3}$ & $\mathbf{3}$ & $\mathbf{3}$\\
    \hline
    \end{tabular}
\end{center}

By construction, for $\{{\mathbb A},{\mathbb B}\} = \{\mathbf{1}, \mathbf{4}\},  \{\mathbf{2}, \mathbf{4}\},  \{\mathbf{3}, \mathbf{4}\}$, we have $t({\mathbb A}, {\mathbb B}, {\mathbb A},  {\mathbb A}) \ne t({\mathbb B}, {\mathbb A}, {\mathbb A}, {\mathbb A})$ (see colored cells in the table). For any other pair of algebras $\{{\mathbb A},{\mathbb B}\} \subseteq \mathcal{B}_{\rm wnu}$, the table shows that
\begin{equation}\label{WNU-prop}
t({\mathbb B}, {\mathbb A}, {\mathbb A}, {\mathbb A})=t({\mathbb A}, {\mathbb B}, {\mathbb A}, {\mathbb A}) = t({\mathbb A}, {\mathbb A}, {\mathbb B}, {\mathbb A})=t({\mathbb A}, {\mathbb A}, {\mathbb A}, {\mathbb B}).
\end{equation}
It is also easy to see from the table that $\phi({\mathbb A},{\mathbb B}) = t({\mathbb A},{\mathbb B},{\mathbb A},{\mathbb A})$ has the following property:
$$
\phi({\mathbb A},{\mathbb B}) = \phi({\mathbb B},{\mathbb A})\,\,\,\forall \{{\mathbb A},{\mathbb B}\} = \{\mathbf{1}, \mathbf{4}\},  \{\mathbf{2}, \mathbf{4}\},  \{\mathbf{3}, \mathbf{4}\}.
$$
Thus, it is easy to check that the following operation has the property~\eqref{WNU-prop} for all pairs $\{{\mathbb A},{\mathbb B}\}$:
$$
r({\mathbb A}_1,{\mathbb A}_2,{\mathbb A}_3, {\mathbb A}_4) = \phi(t({\mathbb A}_1,{\mathbb A}_2,{\mathbb A}_3, {\mathbb A}_4), t({\mathbb A}_4,{\mathbb A}_3,{\mathbb A}_2, {\mathbb A}_1)).
$$

\end{proof}

The fact that $t$ (and, consequently, $r$) is a polymorphism of any $\varrho^{\mathcal{B}_{\rm wnu}}$ can be proved analogously to the proof in Example~\ref{ExCohen}. Since $r$ is a 4-ary weak near-unanimity operation, then from the algebraic version of the dichotomy theorem~\cite{bulatov05:classifying} we know that $\textsc{CSP}(\bGamma^{\mathcal{B}_{\rm wnu}})$ is tractable.

\subsection{Siggers pairs}
\begin{definition}
An algebra $(D; g, s)$ is called a Siggers pair on a domain $D$ if $g$ is a unary operation on
$D$ satisfying $g\circ g = g$ and $s$ is a 4-ary operation that satisfies:
\begin{equation*}
\begin{split}
\forall x,y,z,t\in g(D)\,\,\, s(x,y,z,t)\in g(D),\, s(x,x,x,x)=x \\
\forall r,a,e\in g(D)\,\,\, s(r,a,r,e)=s(a,r,e,a).
\end{split}
\end{equation*}
\end{definition}

It is well-known that $\Gamma$ is tractable if and only if $\Gamma\subseteq {\rm Inv}(\{g,s\})$ for some Siggers pair $(D; g, s)$~\cite{Siggers,Kearnes2014}. Let $\mathcal{B}_{\rm S}$ denote a set of all Siggers pairs on $D$. The relational structure $\bGamma^{\mathcal{B}_{\rm S}}$ has already been studied in~\cite{takhanov:LIPIcs:2017:8247}\footnote{There it was denoted by $\bGamma'$}. The structure has the following remarkable properties: a) for any $\bR$, $\Gamma_{\bR}$ is tractable if and only if $\bR\to \bGamma^{\mathcal{B}_{\rm S}}$ (Lemma 26 in~\cite{takhanov:LIPIcs:2017:8247}), b) $\textsc{CSP}(\bGamma)$ is polynomially Turing reducible to $\textsc{CSP}(\bGamma^{\mathcal{B}_{\rm S}})$ (Theorem 28 in~\cite{takhanov:LIPIcs:2017:8247}), c) from the latter we conclude that if $\bGamma$ is NP-hard, then $\bGamma^{\mathcal{B}_{\rm S}}$ is also NP-hard. 
This example is the only example in the current section in which $\bGamma^{\mathcal{B}_{\rm S}}$ is not polynomially solvable for any $\bGamma$.
\end{document}